\theoremstyle{definition}
\newtheorem{definition}{Definition}
\theoremstyle{plain}
\newtheorem{theorem}{Theorem}
\newtheorem{proposition}[definition]{Proposition}
\newtheorem{lemma}[definition]{Lemma}
\newtheorem{remark}[definition]{Remark}
\newtheorem{corollary}[definition]{Corollary}
\DeclareMathOperator{\diag}{diag}
\DeclareMathOperator{\Row}{Row}
\DeclareMathOperator{\Col}{Col}
\DeclareMathOperator{\dd}{d}
\DeclareMathOperator{\wt}{wt}
\title{Hamming and simplex codes for the sum-rank metric}
\author[1]{Umberto Mart{\'i}nez-Pe\~{n}as \thanks{umberto@ece.utoronto.ca}}
\affil[1]{Dept.\ of Electrical \& Computer Engineering,
University of Toronto, Canada}
\date{}
\begin{document}

\maketitle

\begin{abstract}
Sum-rank Hamming codes are introduced in this work. They are essentially defined as the longest codes (thus of highest information rate) with minimum sum-rank distance at least $ 3 $ (thus one-error-correcting) for a fixed redundancy $ r $, base-field size $ q $ and field-extension degree $ m $ (i.e., number of matrix rows). General upper bounds on their code length, number of shots or sublengths and average sublength are obtained based on such parameters. When the field-extension degree is $ 1 $, it is shown that sum-rank isometry classes of sum-rank Hamming codes are in bijective correspondence with maximal-size partial spreads. In that case, it is also shown that sum-rank Hamming codes are perfect codes for the sum-rank metric. Also in that case, estimates on the parameters (lengths and number of shots) of sum-rank Hamming codes are given, together with an efficient syndrome decoding algorithm. Duals of sum-rank Hamming codes, called sum-rank simplex codes, are then introduced. Bounds on the minimum sum-rank distance of sum-rank simplex codes are given based on known bounds on the size of partial spreads. As applications, sum-rank Hamming codes are proposed for error correction in multishot matrix-multiplicative channels and to construct locally repairable codes over small fields, including binary.

\textbf{Keywords:} Hamming codes, Hamming metric, locally repairable codes, multishot network coding, rank metric, simplex codes, sum-rank metric.

\textbf{MSC:} 94B05, 94B35, 94B65.
\end{abstract}

\section{Introduction} \label{sec intro}

The Hamming metric and Hamming codes were both introduced by Hamming in the seminal work \cite{hamming}. In simple terms, Hamming codes can be understood as the longest linear codes that can correct a single symbol-wise error (thus having minimum Hamming distance at least $ 3 $) for a fixed redundancy (i.e., number of redundant symbols). Their binary version has been extensively applied in practice (usually interleaved or concatenated) for channels with rare bit-wise errors, since their decoding algorithms are very efficient and easy to implement even for large code lengths. 

Since then, other metrics have attracted the attention of coding theorists. Among them, the rank metric \cite{delsartebilinear,gabidulin} is a remarkable example due to its applications mainly in universal error-correction in linearly coded networks \cite{errors-network} and to construct locally repairable codes \cite{silberstein}. Traditionally, linear codes endowed with the rank metric are also called rank-metric codes.

However, in most applications, the length of a rank-metric code does not play the same role as if the Hamming metric were considered instead. The Hamming metric measures the number of symbol-wise errors in discrete memoryless channels (see \cite[Sec. 1.11.2]{pless}). Here, the code length represents the number of shots, corresponding to symbols, that the code covers in such a channel (a block of symbols, thus also the name block code for such codes). When one considers (in a discrete memoryless way) multiple uses or shots of the channels for which the rank metric measures the number of errors, one naturally obtains the so-called sum-rank metric. This metric was formally defined in \cite{multishot} and implicitly considered earlier for space-time coding in \cite{space-time-kumar}. It is the appropriate metric to measure the number of errors and erasures in multishot network coding universally \cite{secure-multishot}, obtain maximally recoverable locally repairable codes with arbitrary local codes \cite{lrc-sum-rank} and codes with optimal rate-diversity in space-time coding with multiple fading blocks \cite[Sec. III]{space-time-kumar}.

In this work, we introduce and briefly study Hamming codes tailored for the sum-rank metric, together with their duals, which we will call sum-rank simplex codes. We will define proper sum-rank Hamming codes (Definition \ref{def sum-rank hamming codes}), which are defined as linear codes capable of correcting a single sum-rank error for the largest number of shots (terms in the sum-rank sum) for a fixed base field, redundancy and matrix sizes. We will also define improper sum-rank Hamming codes (Definition \ref{def sum-rank hamming codes improper}) where the numbers of matrix columns are also allowed to grow. We will briefly comment on the longest rank-metric codes that may correct a single rank error, for fixed base field, redundancy and field-extension degree (or number of matrix rows). We will then turn to the case of field-extension degree $ 1 $ or $ 1 $ row per matrix. In this case, we will show that sum-rank isometry classes of sum-rank Hamming codes correspond bijectively with partial spreads for corresponding parameters. We will estimate the lengths, sublengths and number of shots for these sum-rank Hamming codes based on known bounds on the size of partial spreads. We will show that sum-rank Hamming codes are perfect codes for the sum-rank metric in this case, in contrast with the rank metric, for which there exist no non-trivial perfect codes. We will provide an efficient syndrome decoding algorithm for such codes, and we will lower bound the minimum sum-rank distance of their duals, called sum-rank simplex codes. 

The codes introduced in this work have applications in single error correction in multishot matrix-multiplicative channels (Subsection \ref{subsec matrix-mult channels}), which model for instance multishot network coding \cite{secure-multishot}. They also provide locally repairable codes over small fields, which can correct $ 2 $ extra erasures on top of the admissible local erasures and are backwards-compatible with arbitrary local linear codes, see Subsection \ref{subsec lrc}.

The remainder of this work is organized as follows. In Section \ref{sec sum-rank and isometries}, we define the sum-rank metric and characterize their linear isometries. In Section \ref{sec hamming codes}, we give the main definitions and properties of sum-rank Hamming codes. In Section \ref{sec sum-rank simplex codes}, we introduce and study their duals, that is, sum-rank simplex codes. Finally, in Section \ref{sec applications}, we provide the applications mentioned in the previous paragraph.

\section{The sum-rank metric and its linear isometries}  \label{sec sum-rank and isometries}

We start by fixing the notation that we will use throughout this paper. Fix a prime power $ q $ and positive integers $ m $, $ \ell $ and $ n = n_1 + n_2 + \cdots + n_\ell $. We will denote by $ \mathbb{F}_q $ the finite field with $ q $ elements. We will also denote by $ \mathbb{F}_q^{m \times n} $ the set of $ m \times n $ matrices with entries in $ \mathbb{F}_q $, and we denote $ \mathbb{F}_q^n = \mathbb{F}_q^{1 \times n} $. 

For matrices $ A_i \in \mathbb{F}_q^{n_i \times n_i} $, for $ i = 1,2, \ldots, \ell $, we define the \textit{block-diagonal matrix}
$$ \diag(A_1, A_2,\ldots, A_\ell) = \left( \begin{array}{cccc}
A_1 & 0 & \ldots & 0 \\
0 & A_2 & \ldots & 0 \\
\vdots & \vdots & \ddots & \vdots \\
0 & 0 & \ldots & A_\ell \\
\end{array} \right) \in \mathbb{F}_q^{n \times n}. $$

Fix an ordered basis $ \mathcal{A} = \{ \alpha_1, \alpha_2, \ldots, \alpha_m \} $ of $ \mathbb{F}_{q^m} $ over $ \mathbb{F}_q $. We define $ M_\mathcal{A} : \mathbb{F}_{q^m}^n \longrightarrow \mathbb{F}_q^{m \times n} $ as the matrix representation map, given by 
\begin{equation}
M_\mathcal{A} \left(  \mathbf{c}  \right) = \left( \begin{array}{cccc}
c_{11} & c_{12} & \ldots & c_{1n} \\
c_{21} & c_{22} & \ldots & c_{2n} \\
\vdots & \vdots & \ddots & \vdots \\
c_{m1} & c_{m2} & \ldots & c_{mn} \\
\end{array} \right),
\label{eq def matrix representation map}
\end{equation}
for $ \mathbf{c} = (c_1, c_2, \ldots, c_n) \in \mathbb{F}_{q^m}^n $, where $ c_{1,j}, c_{2,j}, \ldots, c_{m,j} \in \mathbb{F}_q $ are the unique scalars such that $ c_j = \sum_{i=1}^m \alpha_i c_{i,j} \in \mathbb{F}_{q^m} $, for $ j = 1,2, \ldots, n $. 

Given $ X \in \mathbb{F}_q^{m \times n} $, we denote by $ \Row(X) \subseteq \mathbb{F}_q^n $ and $ \Col(X) \subseteq \mathbb{F}_q^m $ the vector spaces generated by the rows and the columns of $ X $, respectively. For $ \mathbf{c} \in \mathbb{F}_{q^m}^n $, we denote $ \Row(\mathbf{c}) = \Row( M_\mathcal{A} (\mathbf{c}) ) \subseteq \mathbb{F}_q^n $ and $ \Col(\mathbf{c}) = \Col( M_\mathcal{A} (\mathbf{c}) ) \subseteq \mathbb{F}_q^m $. The latter depends on $ \mathcal{A} $, but we omit this for simplicity.

We now define the sum-rank metric in $ \mathbb{F}_{q^m}^n $, which was introduced in \cite{multishot}. It  was  implicitly  considered  earlier in the space-time coding literature (see \cite[Sec. III]{space-time-kumar}). The definition that we will consider was first given in \cite[Def. 25]{linearizedRS} and is more general, as the definition in \cite{multishot} considers equals sublengths $ n_1 = n_2 = \ldots = n_\ell $. The case of unequal sublengths is of interest, for instance, for locally repairable codes (see \cite{lrc-sum-rank}). 

\begin{definition} [\textbf{Sum-rank metric \cite{multishot}}]
Let $ \mathbf{c} = (\mathbf{c}^{(1)}, $ $ \mathbf{c}^{(2)}, $ $ \ldots, $ $ \mathbf{c}^{(\ell)}) \in \mathbb{F}_{q^m}^n $, where $ \mathbf{c}^{(i)} \in \mathbb{F}_{q^m}^{n_i} $, for $ i = 1,2, \ldots, \ell $. We define the sum-rank weight of $ \mathbf{c} $ as
$$ \wt_{SR}(\mathbf{c}) = \sum_{i=1}^\ell {\rm Rk}(M_{\mathcal{A}}(\mathbf{c}^{(i)})) , $$
where $ M_\mathcal{A} $ is as in (\ref{eq def matrix representation map}). Finally, we define the sum-rank metric $ \dd_{SR} : (\mathbb{F}_{q^m}^n)^2 \longrightarrow \mathbb{N} $ as $$ \dd_{SR}(\mathbf{c}, \mathbf{d}) = {\rm wt}_{SR} (\mathbf{c} - \mathbf{d}), $$ for all $ \mathbf{c}, \mathbf{d} \in \mathbb{F}_{q^m}^n $. In this work, a linear code is an $ \mathbb{F}_{q^m} $-linear vector subspace $ \mathcal{C} \subseteq \mathbb{F}_{q^m}^n $. We define its minimum sum-rank distance as
$$ {\rm d}_{SR}(\mathcal{C}) = \min \{ {\rm wt}_{SR}(\mathbf{c}) \mid \mathbf{c} \in \mathcal{C} \setminus \{ \mathbf{0} \} \}. $$
\end{definition}

Observe that indeed the sum-rank metric is a metric. Observe also that sum-rank weights and metrics depend on the sum-rank length partition $ n = n_1 + n_2 + \cdots + n_\ell $ and the subfield $ \mathbb{F}_q \subseteq \mathbb{F}_{q^m} $. However, we do not write this dependency for brevity. Note, on the other hand, that they do not depend on the choice of ordered basis $ \mathcal{A} $.  

Finally, we observe that the Hamming metric \cite{hamming} and the rank metric \cite{delsartebilinear,gabidulin} are recovered from the sum-rank metric by setting $ n_1 = n_2 = \ldots = n_\ell = 1 $ and $ \ell = 1 $, respectively. We will also use the notation $ {\rm wt}_R = {\rm wt}_{SR} $ and $ {\rm d}_R = {\rm d}_{SR} $ if $ \ell = 1 $, and $ {\rm wt}_H = {\rm wt}_{SR} $ and $ {\rm d}_H = {\rm d}_{SR} $ if $ n_1 = n_2 = \ldots = n_\ell = 1 $.

The next theorem was proven in \cite[Th. 1]{lrc-sum-rank}. It will be a central tool in the study of sum-rank Hamming codes.

\begin{theorem}[\textbf{\cite{lrc-sum-rank}}] \label{th sum-rank distance is min among hamming distances}
Given a linear code $ \mathcal{C} \subseteq \mathbb{F}_{q^m}^n $, it holds that
\begin{equation*}
\begin{split}
{\rm d}_{SR}(\mathcal{C}) = \min \{ {\rm d}_H(\mathcal{C} A) & \mid A = \diag(A_1, A_2, \ldots, A_\ell), \\
& A_i \in \mathbb{F}_q^{n_i \times n_i} \textrm{ invertible}, 1 \leq i \leq \ell \}.
\end{split}
\end{equation*}
\end{theorem}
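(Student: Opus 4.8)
The plan is to reduce the claimed code-level identity to a statement about a single vector, namely that for every $ \mathbf{c} \in \mathbb{F}_{q^m}^n $ one has
\begin{equation*}
\wt_{SR}(\mathbf{c}) = \min \{ \wt_H(\mathbf{c}A) \mid A = \diag(A_1, \ldots, A_\ell), \ A_i \in \mathbb{F}_q^{n_i \times n_i} \textrm{ invertible} \},
\end{equation*}
and then to pass from vectors to codes by interchanging two minimizations. The first ingredient I would record is that right multiplication by a matrix over the base field commutes with the representation map: since the entries of $ A $ lie in $ \mathbb{F}_q $, a direct computation gives $ M_{\mathcal{A}}(\mathbf{c}A) = M_{\mathcal{A}}(\mathbf{c}) A $, and for block-diagonal $ A $ this factors blockwise as $ M_{\mathcal{A}}(\mathbf{c}^{(i)} A_i) = M_{\mathcal{A}}(\mathbf{c}^{(i)}) A_i $. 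I would also note that the Hamming weight of a vector in $ \mathbb{F}_{q^m}^{n_i} $ equals the number of nonzero columns of its representation matrix, since a coordinate vanishes exactly when the corresponding column does.

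With these observations, the single-vector identity splits across the $ \ell $ blocks. Because $ A $ is block-diagonal and the Hamming weight is additive over the blocks, minimizing $ \wt_H(\mathbf{c}A) $ over all admissible $ A $ amounts to minimizing $ \wt_H(\mathbf{c}^{(i)} A_i) $ over each invertible $ A_i $ independently and summing the results. For a fixed block I would prove $ \Rk(X_i) = \min_{A_i} (\textrm{number of nonzero columns of } X_i A_i) $, where $ X_i = M_{\mathcal{A}}(\mathbf{c}^{(i)}) $: the lower bound is immediate, since right multiplication by an invertible matrix preserves rank and a matrix of rank $ r_i $ has at least $ r_i $ nonzero columns; the matching upper bound is the standard fact that $ \mathbb{F}_q $-column operations, that is, right multiplication by an invertible matrix in $ \mathbb{F}_q^{n_i \times n_i} $, reduce $ X_i $ to a column-echelon form with exactly $ r_i = \Rk(X_i) $ nonzero columns. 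Summing over $ i $ then yields the single-vector identity.

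To finish, I would observe that since $ A $ is invertible the map $ \mathbf{c} \mapsto \mathbf{c}A $ is a bijection of $ \mathcal{C} $ onto $ \mathcal{C}A $ fixing the zero vector, so $ \dd_H(\mathcal{C}A) = \min_{\mathbf{c} \in \mathcal{C} \setminus \{ \mathbf{0} \}} \wt_H(\mathbf{c}A) $. Taking the minimum over $ A $ and interchanging the two minima, which is always valid for a double minimum, gives
\begin{equation*}
\min_A \dd_H(\mathcal{C}A) = \min_{\mathbf{c} \in \mathcal{C} \setminus \{ \mathbf{0} \}} \min_A \wt_H(\mathbf{c}A) = \min_{\mathbf{c} \in \mathcal{C} \setminus \{ \mathbf{0} \}} \wt_{SR}(\mathbf{c}) = \dd_{SR}(\mathcal{C}),
\end{equation*}
where the middle equality is precisely the single-vector identity just established.

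I do not expect a deep obstacle here; the crux is the single-vector identity, and within it the achievability (upper-bound) direction. The step I would be most careful about is ensuring the reduction to column-echelon form is carried out with operations over $ \mathbb{F}_q $ rather than $ \mathbb{F}_{q^m} $, so that the reducing matrix $ A_i $ genuinely lies in $ \mathbb{F}_q^{n_i \times n_i} $, which is exactly what makes right multiplication commute with $ M_{\mathcal{A}} $; and the observation that block-diagonality of $ A $ is precisely what permits optimizing the blocks independently, matching the blockwise sum that defines the sum-rank weight.
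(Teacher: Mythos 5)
Your proof is correct and complete: the commutation $M_{\mathcal{A}}(\mathbf{c}A) = M_{\mathcal{A}}(\mathbf{c})A$ for $A$ over $\mathbb{F}_q$, the blockwise reduction of $M_{\mathcal{A}}(\mathbf{c}^{(i)})$ to column-echelon form by an invertible matrix in $\mathbb{F}_q^{n_i \times n_i}$ (giving the single-vector identity $\wt_{SR}(\mathbf{c}) = \min_A \wt_H(\mathbf{c}A)$), and the exchange of the two minima are all valid. Note that this paper contains no proof of the theorem to compare against — it is imported from \cite{lrc-sum-rank} — but your argument is essentially the standard one given in that cited reference, so there is nothing to flag.
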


In particular, we see that multiplying on the right by invertible block-diagonal matrices over $ \mathbb{F}_q $ constitutes a linear sum-rank isometry, for the corresponding sum-rank length partition. In \cite[Subsec. 3.3]{gsrws}, we anticipated a characterization of all linear sum-rank isometries. We now write such a characterization formally.

In this work, a map $ \phi : \mathbb{F}_{q^m}^n \longrightarrow \mathbb{F}_{q^m}^n $ is called linear if it is $ \mathbb{F}_{q^m} $-linear. As usual, we say that it is an isometry for a metric $ {\rm d} $ if 
$$ {\rm d}(\phi(\mathbf{c}), \phi(\mathbf{d})) = {\rm d}(\mathbf{c}, \mathbf{d}), $$
for all $ \mathbf{c}, \mathbf{d} \in \mathbb{F}_{q^m}^n $. Observe that a linear isometry is always injective (hence a vector space isomorphism if the domain and codomain have the same dimension) since $ {\rm d}(\mathbf{c}, \mathbf{d}) > 0 $ if, and only if, $ \mathbf{c} \neq \mathbf{d} $, therefore $ \phi(\mathbf{c}) \neq \phi(\mathbf{d}) $ if, and only if, $ \mathbf{c} \neq \mathbf{d} $. 

We may now prove the result mentioned above.

\begin{theorem} \label{th iso sum-rank}
Assume first that $ N = n_1 = n_2 = \ldots = n_\ell $ and $ n = \ell N $. A map $ \phi : \mathbb{F}_{q^m}^n \longrightarrow \mathbb{F}_{q^m}^n $ is a linear sum-rank isometry if, and only if, there exist elements $ \beta_1, \beta_2, \ldots, \beta_\ell \in \mathbb{F}_{q^m}^* $, invertible matrices $ A_1, A_2, \ldots, A_\ell \in \mathbb{F}_q^{N \times N} $ and a permutation $ \sigma : [\ell] \longrightarrow [\ell] $ such that
\begin{equation}
\phi(\mathbf{c}^{(1)}, \mathbf{c}^{(2)}, \ldots, \mathbf{c}^{(\ell)}) = (\beta_1 \mathbf{c}^{(\sigma(1))} A_1, \beta_2 \mathbf{c}^{(\sigma(2))} A_2, \ldots, \beta_\ell \mathbf{c}^{(\sigma(\ell))} A_\ell),
\label{eq iso equal lengths}
\end{equation}
for all $ \mathbf{c} = (\mathbf{c}^{(1)}, \mathbf{c}^{(2)}, \ldots, \mathbf{c}^{(\ell)}) \in \mathbb{F}_{q^m}^n $, where $ \mathbf{c}^{(i)} \in \mathbb{F}_{q^m}^{N} $, for $ i = 1,2, \ldots, \ell $.

More generally, assume that the sum-rank partition of $ n $ into sublengths is $ n = \ell_1 N_1 + \ell_2 N_2 + \cdots + \ell_v N_v $, where $ 0 < N_1 < N_2 < \ldots < N_v $ and $ \ell = \ell_1 + \ell_2 + \cdots + \ell_v $. That is, the first $ \ell_1 $ sublengths are equal, the next $ \ell_2 $ sublengths are equal and strictly larger than the previous ones, etcetera. In this case, a map $ \phi : \mathbb{F}_{q^m}^n \longrightarrow \mathbb{F}_{q^m}^n $ is a linear sum-rank isometry if, and only if, there exist linear sum-rank isometries $ \phi_j : \mathbb{F}_{q^m}^{\ell_j N_j} \longrightarrow \mathbb{F}_{q^m}^{\ell_j N_j} $ that are given as in (\ref{eq iso equal lengths}), for $ j = 1,2, \ldots, v $, such that
$$ \phi(\mathbf{c}) = (\phi_1(\mathbf{c}_1), \phi_2(\mathbf{c}_2), \ldots, \phi_v(\mathbf{c}_v)), $$
for all $ \mathbf{c} = (\mathbf{c}_1, \mathbf{c}_2, \ldots, \mathbf{c}_v) $, where $ \mathbf{c}_j \in \mathbb{F}_{q^m}^{\ell_j N_j} $, for $ j = 1,2, \ldots, v $.
\end{theorem}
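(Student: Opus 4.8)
The plan is to treat the two directions separately, disposing of sufficiency by a direct rank computation and devoting the bulk of the work to recovering the block structure from the metric. For the ``if'' direction it suffices to check that a single map of the form (\ref{eq iso equal lengths}) is an isometry. Writing $ M_{\mathcal{A}}(\beta_i \mathbf{c}^{(\sigma(i))} A_i) = B_i M_{\mathcal{A}}(\mathbf{c}^{(\sigma(i))}) A_i $, where $ B_i \in \mathbb{F}_q^{m \times m} $ is the matrix of multiplication by $ \beta_i $ in the basis $ \mathcal{A} $ (invertible since $ \beta_i \neq 0 $) and $ A_i $ has entries in $ \mathbb{F}_q $, each summand satisfies $ \Rk(M_{\mathcal{A}}(\beta_i \mathbf{c}^{(\sigma(i))} A_i)) = \Rk(M_{\mathcal{A}}(\mathbf{c}^{(\sigma(i))})) $, because left and right multiplication by invertible matrices over $ \mathbb{F}_q $ preserves rank. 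Since $ \sigma $ is a bijection, summing over $ i $ shows that $ \wt_{SR} $ is preserved; this reuses the observation, noted after Theorem \ref{th sum-rank distance is min among hamming distances}, that right multiplication by an invertible block-diagonal matrix is an isometry.

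For the ``only if'' direction, the crux is to reconstruct the blocks purely from the sum-rank distance. I would let $ S_1 $ be the set of weight-$ 1 $ vectors and let $ V_i \subseteq \mathbb{F}_{q^m}^n $ be the subspace of vectors supported on the $ i $-th block, so that $ V_i \cap S_1 = \{ \lambda \mathbf{v} : \lambda \in \mathbb{F}_{q^m}^*, \ \mathbf{v} \in \mathbb{F}_q^{n_i} \setminus \{ \mathbf{0} \} \} $ (placed in block $ i $). The key elementary observation is that two weight-$ 1 $ vectors lying in distinct blocks are always at sum-rank distance $ 2 $, while any two weight-$ 1 $ vectors inside the same block are joined by a path of distance-$ 1 $ steps (first adjust $ \lambda $, then adjust $ \mathbf{v} $). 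Consequently the connected components of the graph on $ S_1 $ whose edges are the distance-$ 1 $ pairs are exactly the sets $ V_i \cap S_1 $. As $ \phi $ is a bijective isometry, it is an automorphism of this graph and hence permutes its components: there is a permutation $ \tau $ with $ \phi(V_i \cap S_1) = V_{\tau(i)} \cap S_1 $. Since $ V_i \cap S_1 $ spans $ V_i $ over $ \mathbb{F}_{q^m} $, linearity gives $ \phi(V_i) = V_{\tau(i)} $, and injectivity of $ \phi $ then forces $ n_i = n_{\tau(i)} $, so $ \tau $ preserves sublengths.

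It remains to identify the restriction $ \phi|_{V_i} : V_i \to V_{\tau(i)} $. Under the identifications $ V_i \cong \mathbb{F}_{q^m}^{n_i} \cong V_{\tau(i)} $ this is an $ \mathbb{F}_{q^m} $-linear map preserving the rank weight, i.e.\ a linear rank-metric isometry of $ \mathbb{F}_{q^m}^{n_i} $; invoking the classical characterization of such isometries (the case $ \ell = 1 $), it has the form $ \mathbf{x} \mapsto \beta_i \mathbf{x} A_i $ with $ \beta_i \in \mathbb{F}_{q^m}^* $ and $ A_i \in \mathbb{F}_q^{n_i \times n_i} $ invertible. In the equal-length case this yields (\ref{eq iso equal lengths}) directly after relabelling by $ \sigma = \tau^{-1} $. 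For the general partition, because $ \tau $ preserves sublengths it maps each maximal group of equal sublengths to itself; hence $ \phi $ leaves invariant each subspace $ \mathbb{F}_{q^m}^{\ell_j N_j} $ and restricts there to an equal-length sum-rank isometry, giving the decomposition $ \phi = (\phi_1, \ldots, \phi_v) $.

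The step I expect to be the main obstacle is this last per-block reduction to the rank-metric case, since the classical rank-isometry theorem ultimately rests on the fundamental theorem of projective geometry and needs separate treatment for $ n_i \leq 2 $. By contrast, the block-recovery argument is entirely elementary once the distance-$ 1 $ graph is isolated; one should nonetheless double-check the degenerate regimes (e.g.\ $ m = 1 $, or $ q^m = 2 $), where a rank-$ 1 $ block simply means a nonzero block and the distance-$ 1 $ graph becomes complete on each $ V_i \cap S_1 $, so that the component argument still separates the blocks correctly.
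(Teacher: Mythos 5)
Your proof is correct, and its overall strategy coincides with the paper's: in both, the ``if'' direction is a routine rank computation, and the ``only if'' direction recovers the block structure from the metric and then invokes the known characterization of $ \mathbb{F}_{q^m} $-linear rank-metric isometries \cite{berger} block by block. The difference is in how the blocks are recovered. The paper fixes a block $ i $, tracks the images of the canonical basis vectors $ \mathbf{e}_j^{(i)} $, and argues by pairwise weight computations (using $ {\rm wt}_R(\mathbf{e}_{j_1}^{(i)} + \mathbf{e}_{j_2}^{(i)}) = 1 $) that all of them land in a single common block, yielding $ \phi(V_i) \subseteq V_{\sigma(i)} $; the permutation property is deduced afterwards from bijectivity, and the unequal-sublengths case is explicitly left to the reader. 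You instead form the graph on all weight-one vectors with distance-one edges, identify its connected components with the sets $ V_i \cap S_1 $, and use that a bijective isometry permutes components; this gives $ \phi(V_i) = V_{\tau(i)} $ with equality at once, and your dimension count $ n_i = n_{\tau(i)} $ is precisely the claim the paper defers (``a block of $ n_i $ coordinates cannot be mapped to a block of $ n_k $ coordinates if $ n_i \neq n_k $''). So your route buys a uniform treatment of the equal- and unequal-sublength cases and a complete proof of the step the paper omits, at the cost of slightly more setup; both arguments rest equally on the external input from \cite{berger}, which neither proof attempts to reprove.
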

\begin{proof}
In both the particular and general cases, the reversed implication is trivial, so we only prove the direct one. 

We first prove the particular case where $ N = n_1 = n_2 = \ldots = n_\ell $.  Reindex the canonical basis of $ \mathbb{F}_{q^m}^n $ as $ \mathbf{e}^{(i)}_j = \mathbf{e}_{(i-1)N + j} $, for $ i = 1,2, \ldots, \ell $ and $ j = 1,2,\ldots N $, and fix one such index $ i $. For simplicity, we identify $ \mathbf{e}^{(i)}_j \in \mathbb{F}_{q^m}^n $ with the $ j $th vector of the canonical basis in $ \mathbb{F}_{q^m}^N $, keeping in mind that we have fixed the index $ i $. Define now the linear map $ \psi_i : \mathbb{F}_{q^m}^N \longrightarrow \mathbb{F}_{q^m}^n $ such that
$$ \psi_i(\mathbf{c}^{(i)}) = \phi(\mathbf{0}, \ldots, \mathbf{0}, \mathbf{c}^{(i)}, \mathbf{0}, \ldots, \mathbf{0}), $$
where $ \mathbf{c}^{(i)} \in \mathbb{F}_{q^m}^N $ is placed in the $ i $th block of $ N $ coordinates in $ \mathbb{F}_{q^m}^n $. Obviously, we have that $ {\rm wt}_{SR}(\psi_i(\mathbf{c}^{(i)})) = {\rm wt}_R(\mathbf{c}^{(i)}) $. 

First, we see that $ \psi_i(\mathbf{e}_{j}^{(i)})^{(k)} = \mathbf{0} $ for all $ k \in [\ell] $ except for exactly one, which may depend on $ j $, for $ j = 1,2, \ldots, N $. Indeed if $ \psi_i(\mathbf{e}_{j}^{(i)})^{(k)} \neq \mathbf{0} $ for two $ k $, then $ {\rm wt}_{SR}(\psi_i(\mathbf{e}_{j}^{(i)})) \geq 2 $, which contradicts that $ {\rm wt}_R(\mathbf{e}_j^{(i)}) = 1 $.

Similarly we see that if $ \psi_i(\mathbf{e}_{j_1}^{(i)})^{(i_1)} \neq \mathbf{0} $, $ \psi_i(\mathbf{e}_{j_2}^{(i)})^{(i_2)} \neq \mathbf{0} $, and $ j_1 \neq j_2 $, then $ i_1 = i_2 $. If this does not hold, then by the previous paragraph, we have that
$$ 1 = {\rm wt}_R(\mathbf{e}_{j_1}^{(i)} + \mathbf{e}_{j_2}^{(i)}) = {\rm wt}_{SR}(\psi_i(\mathbf{e}_{j_1}^{(i)}) + \psi_i(\mathbf{e}_{j_2}^{(i)})) \geq 2, $$
which is absurd. Therefore for all $ i \in [\ell] $, there exists $ \sigma(i) \in [\ell] $ such that $ \psi_i(\mathbf{e}_{j}^{(i)})^{(\sigma(i))} \neq \mathbf{0} $ and $ \psi_i(\mathbf{e}_{j}^{(i)})^{(k)} = \mathbf{0} $ for $ k \neq \sigma(i) $. In other words, $ \pi_i \circ \psi_i : \mathbb{F}_{q^m}^N \longrightarrow \mathbb{F}_{q^m}^N $ is a linear rank isometry, where $ \pi_i : \mathbb{F}_{q^m}^n \longrightarrow \mathbb{F}_{q^m}^N $ is the projection onto the $ i $th block of $ N $ coordinates. By \cite[Th. 1]{berger}, there exists an element $ \beta_i \in \mathbb{F}_{q^m}^* $ and an invertible matrix $ A_i \in \mathbb{F}_q^{N \times N} $ such that $ \pi_i(\psi_i(\mathbf{c}^{(i)})) = \beta_i \mathbf{c}^{(i)}A_i $, for all $ \mathbf{c}^{(i)} \in \mathbb{F}_{q^m}^N $.

Finally, $ \sigma : [\ell] \longrightarrow [\ell] $ must be a bijection, i.e., a permutation, because otherwise $ \phi $ would not be a vector space isomorphism. Thus the result follows.

To extend this result to the general case (where the sublengths $ n_i $ may not be equal), we only need to prove that a block of $ n_i $ coordinates cannot be mapped to a block of $ n_k $ coordinates if $ n_i \neq n_k $. The proof of this claim is similar to the rest of this proof and is left to the reader.
\end{proof}

Finally, since MacWilliams' extension theorem does not hold in the case $ \ell = 1 $ (see \cite[Ex. 2.9(c)]{barra}), we choose to define sum-rank isometric codes (of the same length) as follows.

\begin{definition} \label{def sum-rank isometric codes}
We say that two linear codes $ \mathcal{C}, \mathcal{D} \subseteq \mathbb{F}_{q^m}^n $ are sum-rank isometric if there exists a linear sum-rank isometry $ \phi : \mathbb{F}_{q^m}^n \longrightarrow \mathbb{F}_{q^m}^n $ such that $ \mathcal{D} = \phi (\mathcal{C}) $.
\end{definition}

\section{Hamming codes for the sum-rank metric}  \label{sec hamming codes}

In this section, we introduce Hamming codes for the sum-rank metric, which we will call sum-rank Hamming codes for brevity.

\subsection{Definition and basic properties}  \label{subsec def and basic properties}

We start by giving the main definitions and basic properties. We will distinguish between codes with equal sublengths, which we will call proper, and those with unequal sublengths, which we will call improper.

\begin{definition} [\textbf{Proper sum-rank Hamming codes}] \label{def sum-rank hamming codes}
Fix the base field size $ q $, the extension degree $ m $ and a redundancy $ r $. Let $ n = \ell N $ be a sum-rank length partition with equal sublengths $ N = n_1 = n_2 = \ldots = n_\ell $. We say that a linear code $ \mathcal{C} \subseteq \mathbb{F}_{q^m}^n $ is a proper sum-rank Hamming code if $ {\rm d}_{SR}(\mathcal{C}) \geq 3 $, $ r = n - \dim(\mathcal{C}) $ and there is no other linear code $ \mathcal{C}^\prime \subseteq \mathbb{F}_{q^m}^{n^\prime} $ with $ {\rm d}_{SR}(\mathcal{C}^\prime) \geq 3 $ and $ r = n^\prime - \dim(\mathcal{C}^\prime) $ for a sum-rank length partition $ n^\prime = N \ell^\prime $ where $ \ell^\prime > \ell $.
\end{definition}

\begin{definition} [\textbf{Improper sum-rank Hamming codes}] \label{def sum-rank hamming codes improper}
Fix the base field size $ q $, the extension degree $ m $ and a redundancy $ r $. Let $ n = n_1 + n_2 + \cdots + n_\ell $ be a sum-rank length partition. We say that a linear code $ \mathcal{C} \subseteq \mathbb{F}_{q^m}^n $ is an improper sum-rank Hamming code if $ {\rm d}_{SR}(\mathcal{C}) \geq 3 $, $ r = n - \dim(\mathcal{C}) $ and there is no other linear code $ \mathcal{C}^\prime \subseteq \mathbb{F}_{q^m}^{n^\prime} $ with $ {\rm d}_{SR}(\mathcal{C}^\prime) \geq 3 $ and $ r = n^\prime - \dim(\mathcal{C}^\prime) $ for a sum-rank length partition $ n^\prime = n_1^\prime + n_2^\prime + \cdots + n^\prime_{\ell^\prime} $ where $ n^\prime > n $, $ \ell^\prime \geq \ell $ and $ n^\prime_i \geq n_i $, for $ i = 1,2, \ldots, \ell $.
\end{definition}

\begin{remark}
Classical Hamming codes \cite{hamming} (see also \cite[Sec. 1.8]{pless}) are therefore proper sum-rank Hamming codes where $ m = N = 1 $. To the best of our knowledge, the case $ \ell = 1 $, or Hamming codes for the rank metric, has not yet been considered. We will briefly treat it in Subsection \ref{subsec rank Hamming codes}.
\end{remark}

In other words, sum-rank Hamming codes are, by definition, the longest linear codes with minimum sum-rank distance at least $ 3 $ and redundancy $ r $, for a given base field and field-extension degree. The whole point of this definition is to obtain the codes with highest information rate for codes that can correct one sum-rank error, for a given base field and field-extension degree. The base field and field-extension degrees are fixed, since in general better codes (longer with equal or larger information rates) can be obtained over larger fields, establishing a natural trade-off, since larger fields imply higher computational complexity and less fine symbol partitions (meaning that errors on smaller symbols, such as bit errors, are treated as errors on larger symbols, such as bytes). 

In practice, one usually wishes to fix a code length and find the single error-correcting code with the largest information rate (i.e., largest dimension). However, as in the classical case, we will see that it is easier to fix the redundancy and try to find the longest single error-correcting code for that redundancy.

Due to Theorem \ref{th sum-rank distance is min among hamming distances}, it will be useful to use a basic characterization of codes with minimum Hamming distance at least $ 3 $. The following lemma is \cite[Cor. 1.4.14]{pless}.

\begin{lemma} \label{lemma min hamming distance 3}
Let $ \mathcal{C} \subseteq \mathbb{F}_{q^m}^n $ be a linear code with parity-check matrix $ H \in \mathbb{F}_{q^m}^{r \times n} $. Then $ {\rm d}_H(\mathcal{C}) \geq 3 $ if, and only if, any two columns of $ H $ are linearly independent over $ \mathbb{F}_{q^m} $.
\end{lemma}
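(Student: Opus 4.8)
The plan is to use the standard correspondence between nonzero codewords of $ \mathcal{C} $ and linear dependence relations among the columns of $ H $. Writing $ \mathbf{h}_1, \mathbf{h}_2, \ldots, \mathbf{h}_n \in \mathbb{F}_{q^m}^r $ for the columns of $ H $, a vector $ \mathbf{c} = (c_1, c_2, \ldots, c_n) \in \mathbb{F}_{q^m}^n $ lies in $ \mathcal{C} $ if, and only if, $ H \mathbf{c}^T = \sum_{j=1}^n c_j \mathbf{h}_j = \mathbf{0} $. Under this identification, the Hamming weight of $ \mathbf{c} $ is exactly the number of columns appearing with nonzero coefficient in the relation, so nonzero codewords of small weight correspond precisely to small linearly dependent subsets of columns. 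Both implications are then proved by contraposition.

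For the direct implication, I would assume that some pair of distinct columns $ \mathbf{h}_i, \mathbf{h}_j $ (with $ i \neq j $) is linearly dependent over $ \mathbb{F}_{q^m} $. Then there exist scalars $ a, b \in \mathbb{F}_{q^m} $, not both zero, with $ a \mathbf{h}_i + b \mathbf{h}_j = \mathbf{0} $. The vector with $ c_i = a $, $ c_j = b $ and all remaining entries zero is therefore a nonzero codeword of Hamming weight at most $ 2 $, whence $ {\rm d}_H(\mathcal{C}) \leq 2 < 3 $.

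For the reverse implication, I would assume $ {\rm d}_H(\mathcal{C}) \leq 2 $ and choose a nonzero codeword $ \mathbf{c} $ of minimal weight, so that $ {\rm wt}_H(\mathbf{c}) \in \{1, 2\} $. If its weight is $ 1 $, with support $ \{i\} $ and $ c_i \neq 0 $, then $ c_i \mathbf{h}_i = \mathbf{0} $ forces $ \mathbf{h}_i = \mathbf{0} $, and the zero column is linearly dependent with any other column of $ H $. If its weight is $ 2 $, with support $ \{i, j\} $, then $ c_i \mathbf{h}_i + c_j \mathbf{h}_j = \mathbf{0} $ with $ c_i, c_j \neq 0 $ exhibits a nontrivial dependence between the two distinct columns $ \mathbf{h}_i $ and $ \mathbf{h}_j $. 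In either case, two columns of $ H $ are linearly dependent, contradicting the hypothesis.

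No step presents a genuine obstacle, as this is a classical fact. The only point requiring slight care is the weight-$ 1 $ case: a zero column must be recognized as already violating the pairwise linear-independence condition (for $ n \geq 2 $, that condition implicitly rules out zero columns, since any set containing the zero vector is linearly dependent), so that the two conditions ``no column is zero'' and ``no two columns are proportional'' are correctly subsumed by the single clean statement that any two columns are linearly independent.
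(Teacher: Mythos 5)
The paper offers no proof of this lemma at all; it simply cites \cite[Cor. 1.4.14]{pless}. Your argument is the standard textbook proof of that cited fact --- identifying codewords with linear dependence relations $ H \mathbf{c}^T = \sum_{j=1}^n c_j \mathbf{h}_j = \mathbf{0} $ among the columns of $ H $ --- and it is correct, including the care taken with the weight-$1$ case, where a zero column is rightly observed to violate pairwise linear independence whenever $ n \geq 2 $.
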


Equipped with Theorem \ref{th sum-rank distance is min among hamming distances} and Lemma \ref{lemma min hamming distance 3}, we may show that Definitions \ref{def sum-rank hamming codes} and \ref{def sum-rank hamming codes improper} are consistent, meaning that both the length $ n $ and number of sublengths $ \ell $ of a sum-rank Hamming code (proper or improper) cannot be infinite. However, the next proposition does not imply that sum-rank Hamming codes exist for all the considered parameters. That is, if a code with minimum sum-rank distance at least $ 3 $ exists, for given base and extension fields, and given redundancy, then one such code with maximum possible length exists, which would then be a sum-rank Hamming code. We will actually give two bounds, using also \cite[Cor. 3]{lrc-sum-rank}, and give an upper bound on the average sublength. However, these bounds are not sharp in general, as shown later by Corollary \ref{cor length of proper sum-rank hamming m=1 }.

\begin{proposition}
Let $ \mathcal{C} \subseteq \mathbb{F}_{q^m}^n $ be a linear code such that $ {\rm d}_{SR}(\mathcal{C}) \geq 3 $ and $ n = n_1 + n_2 + \cdots + n_\ell $. Then it holds that
\begin{equation}
n \leq \min \left\lbrace \frac{q^{mr} - 1}{q^m - 1},  \left\lfloor \frac{\ell m r}{2} \right\rfloor  \right\rbrace ,
\label{eq upper bound length}
\end{equation}
where $ r = n - \dim(\mathcal{C}) $. In particular, we have the following upper bound on the average sublength:
\begin{equation}
\frac{\sum_{i=1}^\ell n_i}{\ell} \leq \frac{mr}{2}.
\label{eq upper bound average}
\end{equation}
In particular, for proper sum-rank Hamming codes with equal sublengths $ N = n_1 = n_2 = \ldots = n_\ell $, we deduce that $ 2N \leq mr $ and
\begin{equation}
\ell \leq \frac{q^{mr} - 1}{N(q^n - 1)}.
\label{eq upper bound for proper}
\end{equation}
\end{proposition}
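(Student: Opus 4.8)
The plan is to establish the two upper bounds on $n$ separately and then read off the average-sublength inequality and the proper-code consequences by elementary manipulation.

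First, for the bound $n \le (q^{mr}-1)/(q^m-1)$, I would reduce to the classical Hamming situation. By Theorem~\ref{th sum-rank distance is min among hamming distances}, $d_{SR}(\mathcal{C})$ is the minimum of $d_H(\mathcal{C}A)$ over all invertible block-diagonal $A = \diag(A_1,\ldots,A_\ell)$; taking $A$ to be the identity (each $A_i = \Id_{n_i}$) shows $d_H(\mathcal{C}) \ge d_{SR}(\mathcal{C}) \ge 3$. Fixing a parity-check matrix $H \in \mathbb{F}_{q^m}^{r\times n}$ of $\mathcal{C}$, Lemma~\ref{lemma min hamming distance 3} then guarantees that any two columns of $H$ are $\mathbb{F}_{q^m}$-linearly independent. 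In particular every column is nonzero and no two columns are scalar multiples of one another, so the $n$ columns determine $n$ pairwise distinct one-dimensional subspaces of $\mathbb{F}_{q^m}^r$, i.e. $n$ distinct points of the projective space over $\mathbb{F}_{q^m}$. Counting the points of that projective space gives $n \le (q^{mr}-1)/(q^m-1)$.

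Second, for the bound $n \le \lfloor \ell m r /2\rfloor$ I would invoke the sum-rank Singleton-type estimate \cite[Cor.~3]{lrc-sum-rank}. Writing the $\mathbb{F}_q$-redundancy as $mr = \dim_{\mathbb{F}_q}(\mathbb{F}_{q^m}^n) - \dim_{\mathbb{F}_q}(\mathcal{C})$, that result relates $d_{SR}(\mathcal{C})$, the sublengths $n_1,\ldots,n_\ell$ and $mr$; specialized to the present hypothesis $d_{SR}(\mathcal{C}) \ge 3$ it delivers $2\sum_{i=1}^\ell n_i \le \ell\, m r$. Dividing by $2\ell$ gives exactly the average-sublength inequality (\ref{eq upper bound average}), and since $n = \sum_{i=1}^\ell n_i$ is an integer, taking the floor produces the second term of the minimum in (\ref{eq upper bound length}). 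The proper-code statements are then immediate specializations: when $N = n_1 = \cdots = n_\ell$ the left-hand side of (\ref{eq upper bound average}) equals $N$, so $N \le mr/2$, i.e. $2N \le mr$; and the bound (\ref{eq upper bound for proper}) on $\ell$ follows at once from the first inequality on $n$ together with $n = N\ell$, simply by dividing by $N$.

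The step I expect to be the main obstacle is the correct invocation of the external bound \cite[Cor.~3]{lrc-sum-rank} in the normalization that produces the clean factor of $2$, since the elementary combinatorics underlying the first bound give only a factor-one estimate and do not by themselves yield $2n \le \ell m r$. The remaining ingredients—the reduction to $d_H(\mathcal{C}) \ge 3$ via $A = \Id$, the projective-point count, and the rearrangements for the proper case—are routine. I would also double-check that $d_{SR}$ being the minimum over all admissible $A$ genuinely forces $d_H(\mathcal{C}) \ge 3$, rather than only $d_H(\mathcal{C}A) \ge 3$ for some single $A$.
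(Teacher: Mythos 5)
Your proposal matches the paper's proof essentially step for step: the first term of the minimum is obtained exactly as you describe (Theorem~\ref{th sum-rank distance is min among hamming distances} gives $ {\rm d}_H(\mathcal{C}) \geq 3 $, then Lemma~\ref{lemma min hamming distance 3} and a projective-point count), and the second term is obtained from the same external bound \cite[Cor.~3]{lrc-sum-rank}, which the paper instantiates as $ |\mathcal{C}| \leq (q^{n/\ell})^{\ell m - {\rm d}_{SR}(\mathcal{C}) + 1} $ and, with $ |\mathcal{C}| = q^{m(n-r)} $ and $ {\rm d}_{SR}(\mathcal{C}) \geq 3 $, solves to $ 2n \leq \ell m r $ — precisely the ``factor of $2$'' computation you flagged as the only remaining bookkeeping. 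The deductions of the average-sublength bound and the proper-code consequences are the same routine specializations in both arguments.
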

\begin{proof}
From Theorem \ref{th sum-rank distance is min among hamming distances}, we deduce that $ {\rm d}_H(\mathcal{C}) \geq 3 $. Thus $ \mathcal{C} $ satisfies the conditions in Lemma \ref{lemma min hamming distance 3}. Hence its code length $ n $ is not larger than the size of the projective space $ \mathbb{P}_{\mathbb{F}_{q^m}}(\mathbb{F}_{q^m}^r) $, which is the first term in the minimum in (\ref{eq upper bound length}). 

On the other hand, the result \cite[Cor. 3]{lrc-sum-rank} says that 
$$ | \mathcal{C} | \leq \left( q^{n/\ell} \right)^{\ell m - {\rm d}_{SR}(\mathcal{C}) + 1}. $$
Using that $ | \mathcal{C} | = q^{m(n-r)} $, solving for $ {\rm d}_{SR} (\mathcal{C}) $ and using that $ {\rm d}_{SR}(\mathcal{C}) \geq 3 $, we deduce that
$$ 3 \leq {\rm d}_{SR}(\mathcal{C}) \leq \frac{\ell m r}{n} + 1, $$
from which the second term in the minimum in (\ref{eq upper bound length}) follows, and we are done.
\end{proof}

\begin{remark}
Note that, for classical Hamming codes, we have that $ N = n_1 = n_2 = \ldots = n_\ell = 1 $ and $ m = 1 $, thus $ n = \ell $ and (\ref{eq upper bound length}) simply says that $ r \geq 2 $ (and actually any $ r \geq 2 $ can be chosen), which is obvious from the Singleton bound, and $ n \leq (q^r-1)/(q-1) $, which is the actual length of classical Hamming codes (see \cite[Sec. 1.8]{pless}).
\end{remark}

We next give a basic characterization of having minimum sum-rank distance at least $ 3 $. The proof is immediate from Theorem \ref{th sum-rank distance is min among hamming distances} and Lemma \ref{lemma min hamming distance 3} after unfolding the definitions. We will make use of this result mainly for the case $ m = 1 $ in Subsection \ref{subsec partial spreads}.

\begin{proposition} \label{prop sr distance at least 3}
Let $ n = n_1 + n_2 + \cdots + n_\ell $ be a sum-rank length partition. Let $ \mathcal{C} \subseteq \mathbb{F}_{q^m}^n $ be a linear code with a parity check matrix $ H = (H_1, H_2, \ldots, H_\ell) \in \mathbb{F}_{q^m}^{r \times n} $, where $ H_i \in \mathbb{F}_{q^m}^{r \times n_i} $, for $ i = 1,2, \ldots, \ell $. Then it holds that $ {\rm d}_{SR}(\mathcal{C}) \geq 3 $ if, and only if, the following two conditions hold:
\begin{enumerate}
\item
$ H_i \mathbf{a}_i^T \in \mathbb{F}_{q^m}^r $ and $ H_j \mathbf{a}_j^T \in \mathbb{F}_{q^m}^r $ are linearly independent over $ \mathbb{F}_{q^m} $, for all non-zero $ \mathbf{a}_i \in \mathbb{F}_q^{n_i} $ and $ \mathbf{a}_j \in \mathbb{F}_q^{n_j} $, and for all $ 1 \leq i < j \leq \ell $, and
\item
$ H_i \mathbf{a}^T \in \mathbb{F}_{q^m}^r $ and $ H_i \mathbf{b}^T \in \mathbb{F}_{q^m}^r $ are linearly independent over $ \mathbb{F}_{q^m} $, for all $ \mathbf{a} \in \mathbb{F}_q^{n_i} $ and $ \mathbf{b} \in \mathbb{F}_q^{n_i} $ that are linearly independent over $ \mathbb{F}_q $, for all $ i = 1,2, \ldots, \ell $.
\end{enumerate}
\end{proposition}

\subsection{On the case $ \ell = 1 $: Rank-metric Hamming codes} \label{subsec rank Hamming codes}

In this subsection, we briefly discuss the rank-metric case, that is, $ \ell = 1 $. Observe that, in principle, there may not exist sum-rank Hamming codes with $ \ell = 1 $, whether proper or improper. This is because, both in Definitions \ref{def sum-rank hamming codes} and \ref{def sum-rank hamming codes improper}, we do not fix $ \ell $ and we allow it to grow. In other words, it may be the case that, for a given redundancy, base field and extension field, there always exists a code with minimum sum-rank distance at least $ 3 $ and $ \ell > 1 $. Thus, for such redundancy, base and extension fields, no rank-metric Hamming code would exist. We believe that looking for the largest number of shots $ \ell $ is the right way of considering Hamming codes, as $ \ell $ is the dominant part of the code length and has analogous interpretations to the classical code length in the applications. See the Introduction or Section \ref{sec applications} for details.

However, we briefly discuss what happens if we decide to fix $ \ell = 1 $ and find the longest linear rank-metric code $ \mathcal{C} \subseteq \mathbb{F}_{q^m}^n $ with $ {\rm d}_R(\mathcal{C}) \geq 3 $ for a fixed redundancy $ r = n - \dim(\mathcal{C}) $, base field $ q $ and field-extension degree $ m $. Whether such codes are sum-rank Hamming codes according to Definitions \ref{def sum-rank hamming codes} and \ref{def sum-rank hamming codes improper} is left open. 

First, it is not difficult to see that the minimum in (\ref{eq upper bound length}) is attained by the second term. Hence we have the bound
\begin{equation}
2n \leq rm.
\label{eq upper bound for rank hamming}
\end{equation}

The bound (\ref{eq upper bound for rank hamming}) can be attained for many choices of parameters, although we leave it as open problem to see if it can always be achieved as long as $ 2 $ divides $ rm $.

The following constructions from the literature based on Gabidulin codes \cite{gabidulin} achieve the bound (\ref{eq upper bound for rank hamming}). Assume that $ 2n = rm $. First note that the case $ r = 1 $ may not happen, because then it would hold that $ 2n = m $ by (\ref{eq upper bound for rank hamming}) and $ 3 \leq {\rm d}_R(\mathcal{C}) \leq r + 1 = 2 $ by the Singleton bound. Hence $ r \geq 2 $ and therefore, $ n \geq m $. 

\begin{enumerate}
\item
If $ r = 2 $, then $ n = m $. In this case, the Gabidulin code \cite{gabidulin} of length $ n $ and dimension $ n - 2 $ has minimum rank distance exactly $ {\rm d}_R(\mathcal{C}) = n - \dim(\mathcal{C}) + 1 = r + 1 = 3 $.
\item
In general for $ r $ even, let $ r = 2 h $ for a positive integer $ h $. Then a Cartesian product of $ h $ Gabidulin codes, each of length $ m $ and dimension $ m-2 $, gives a linear code $ \mathcal{C} \subseteq \mathbb{F}_{q^m}^n $ of length $ n = hm $, dimension $ k = h(m-2) = n - r $ and minimum rank distance $ {\rm d}_R(\mathcal{C}) = 3 $. It is easy to check that $ 2n = mr $, hence (\ref{eq upper bound for rank hamming}) is attained. 
\end{enumerate}

Decoding such codes can be done using decoding algorithms for Gabidulin codes. Unfortunately, as is well-known, decoding one Gabidulin code of length $ n $ is not efficient in practice, as the known decoding algorithms require in general a super-linear (in $ n $) number of operations over $ \mathbb{F}_{q^m} $, whose size is exponential in the code length $ n $. The sum-rank Hamming codes with $ m = 1 $ from Subsection \ref{subsec partial spreads}, in contrast, admit very efficient decoding algorithms (see Subsection \ref{subsec syndrome decoding m=1 }).

Finally, we observe that there exists a generalization of Gabidulin codes for the sum-rank metric in general. Such codes are called linearized Reed-Solomon codes and were introduced in general in \cite{linearizedRS}. Such codes recover classical Reed-Solomon codes when $ n_1 = n_2 = \ldots = n_\ell = 1 $, that is, when the sum-rank metric becomes the Hamming metric. For this reason, and since Reed-Solomon codes and Hamming codes are not equal (for the Hamming metric), we do not expect linearized Reed-Solomon codes to be sum-rank Hamming codes in general.

\subsection{The case $ m = 1 $: Maximal-size partial spreads}  \label{subsec partial spreads}

In this subsection, we study the case $ m = 1 $. Although this case may seem mathematically simpler, it is arguably the most interesting case from a practical point of view, as the size of the field $ q^m $ becomes simply $ q $. As we shall see, sum-rank isometry classes of sum-rank Hamming codes are in correspondence with maximal-size partial spreads \cite{beutel}.

We observe that, when $ m = 1 $, the sum-rank metric becomes the Hamming metric over the alphabets $ \mathbb{F}_q^{n_1}, \mathbb{F}_q^{n_2}, \ldots, \mathbb{F}_q^{n_\ell} $. If not all sublengths $ n_1, n_2, \ldots, n_\ell $ are equal, the corresponding codes are called mixed codes or polyalphabetic codes in the literature \cite{Etzion-constructions,Herzog,Sidorenko-polyalphabetic}. Constructions of Hamming codes for a single alphabet of the form $ \mathbb{F}_q^N $ (that is, the case $ N = n_1 = n_2 = \ldots = n_\ell $), with $ N > 1 $, were also given in \cite{etzion-byte}. However, the results in this subsection were not obtained in previous works, e.g., \cite{etzion-byte,Etzion-constructions,Herzog,Sidorenko-polyalphabetic}, to the best of our knowledge. 

We start by revisiting the definition of partial spreads.

\begin{definition}[\textbf{Partial spreads}]  \label{def partial spread}
Let $ 1 \leq N \leq r $. A partial $ N $-spread in $ \mathbb{F}_q^r $ is a family $ \mathcal{P} = \{ \mathcal{H}_i \}_{i \in I} $ of $ N $-dimensional vector subspaces $ \mathcal{H}_i \subseteq \mathbb{F}_q^r $, for $ i \in I $, such that $ \mathcal{H}_i \cap \mathcal{H}_j = \{ \mathbf{0} \} $, whenever $ i \neq j $. A maximal-size partial $ N $-spread is a partial $ N $-spread $ \mathcal{P} = \{ \mathcal{H}_i \}_{i \in I} $ such that $ I $ has maximal size among partial $ N $-spreads in $ \mathbb{F}_q^r $.
\end{definition}

We next characterize proper sum-rank Hamming codes when $ m = 1 $ in terms of partial spreads. 

\begin{theorem} \label{th SR Hamming given by partial spreads}
Let $ n = \ell N $ be a sum-rank length partition with equal sublengths $ N = n_1 = n_2 = \ldots = n_\ell $. A linear code $ \mathcal{C} \subseteq \mathbb{F}_q^n $ is a proper sum-rank Hamming code if, and only if, $ N \leq r = n - \dim(\mathcal{C}) $ and it has a parity-check matrix of the form
\begin{equation}
H = (H_1, H_2, \ldots, H_\ell) \in \mathbb{F}_q^{r \times n},
\label{eq parity-check SR hamming proper partial spreads}
\end{equation} 
where $ H_i \in \mathbb{F}_q^{r \times N} $ and $ \mathcal{H}_i = {\rm Col}(H_i) \subseteq \mathbb{F}_q^r $, for $ i = 1,2, \ldots, \ell $, and $ \mathcal{P} = \{ \mathcal{H}_1, \mathcal{H}_2, \ldots, \mathcal{H}_\ell \} $ is a maximal-size partial $ N $-spread in $ \mathbb{F}_q^r $.
\end{theorem}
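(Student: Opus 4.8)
The plan is to prove both implications at once through a dictionary between a parity-check matrix of $\mathcal{C}$ and its column spaces $\mathcal{H}_i=\Col(H_i)\subseteq\mathbb{F}_q^r$, starting from Proposition \ref{prop sr distance at least 3} specialized to $m=1$, where $\mathbb{F}_{q^m}=\mathbb{F}_q$ and independence ``over $\mathbb{F}_{q^m}$'' means independence over $\mathbb{F}_q$. First I would translate the minimum-distance condition geometrically. Condition (2) of Proposition \ref{prop sr distance at least 3} then says that $H_i$ maps $\mathbb{F}_q$-independent vectors to $\mathbb{F}_q$-independent vectors; since a linear map always preserves independence, this holds if and only if $H_i$ is injective, i.e.\ if and only if $\mathcal{H}_i$ has dimension exactly $N$ (for the corner case $N=1$ the nonvanishing of $H_i$ comes instead from condition (1), using $\ell\ge 2$), which forces $N\le r$. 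Condition (1) says that every nonzero vector of $\mathcal{H}_i$ is independent from every nonzero vector of $\mathcal{H}_j$ for $i\ne j$; as two nonzero vectors are dependent exactly when they span a common line, this is equivalent to $\mathcal{H}_i\cap\mathcal{H}_j=\{\mathbf{0}\}$. Hence ${\rm d}_{SR}(\mathcal{C})\ge 3$ if and only if each $\mathcal{H}_i$ is $N$-dimensional and $\{\mathcal{H}_1,\ldots,\mathcal{H}_\ell\}$ is a partial $N$-spread in $\mathbb{F}_q^r$.

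Next I would connect the redundancy to the span. For a genuine parity-check matrix $H\in\mathbb{F}_q^{r\times n}$ of a code with $r=n-\dim(\mathcal{C})$ one has $\Rk(H)=r$, so $\mathcal{H}_1+\cdots+\mathcal{H}_\ell=\Col(H)=\mathbb{F}_q^r$; conversely a partial $N$-spread spanning $\mathbb{F}_q^r$ yields a full-rank $H$, hence a code of redundancy exactly $r$. Combined with the previous step, this gives a bijection between codes of sublength $N$ and redundancy $r$ with ${\rm d}_{SR}\ge 3$ and $\ell$ shots, and spanning partial $N$-spreads of size $\ell$ in $\mathbb{F}_q^r$. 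The reverse implication of the theorem is then immediate: the hypothesis hands us a spanning spread of maximal size, whose size therefore dominates that of every spanning spread, so no admissible code can have more shots and $\mathcal{C}$ is proper. For the forward implication, a proper code produces a spanning spread of size $\ell$, and I must show it is maximal among \emph{all} partial $N$-spreads of $\mathbb{F}_q^r$: a strictly larger one that spans would contradict properness directly, so the only danger is a strictly larger spread living inside a proper subspace.

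This last point is where the real work lies, namely reconciling ``maximal among spanning spreads'' (what properness yields) with ``maximal-size partial spread in $\mathbb{F}_q^r$'' (the statement). I would isolate it as the claim that, whenever $\mathbb{F}_q^r$ admits a spanning partial $N$-spread, every maximal-size partial $N$-spread already spans $\mathbb{F}_q^r$. The clean route is strict monotonicity of the maximal partial-spread size in the ambient dimension on the relevant range $r\ge 2N$ supplied (since $m=1$) by the bound $2N\le mr$ of the preceding proposition: a maximal-size spread contained in a subspace of dimension $s<r$ would have size at most the maximum for dimension $s$, which is strictly smaller, a contradiction, so it must span. I would obtain this monotonicity by comparing the known lower bound in dimension $s+1$ against the known upper bound in dimension $s$ from \cite{beutel} (the two estimates grow like $q^{r-N}$ and thus separate consecutive dimensions on this range), or directly by an extension argument: a non-spanning spread that fails to cover its own span can be enlarged by adjoining an $N$-space meeting that span only in an uncovered line and otherwise transverse to it, while the sole obstruction---being a full spread of a proper subspace---has size strictly below that of genuinely spanning spreads of $\mathbb{F}_q^r$. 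I expect the bookkeeping of this extension/monotonicity step, rather than the distance translation, to be the main obstacle.
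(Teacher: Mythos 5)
Your proposal is correct, and its core is the same dictionary the paper uses: specialize Proposition \ref{prop sr distance at least 3} to $m=1$, so that condition (2) forces each $\mathcal{H}_i=\Col(H_i)$ to have dimension $N$ (whence $N\le r$), condition (1) forces $\mathcal{H}_i\cap\mathcal{H}_j=\{\mathbf{0}\}$, and the reverse implication follows because a maximal-size spread dominates the spread extracted from any competing code. Where you genuinely diverge is the last step of the forward implication. The paper's proof disposes of it in one sentence (``$\ell$ is maximum if and only if $\mathcal{P}$ is a maximal-size partial $N$-spread, since $\ell=|\mathcal{P}|$''), which silently identifies maximality among \emph{all} partial $N$-spreads with maximality among \emph{spanning} ones, i.e., those realizable as column spaces of a genuine rank-$r$ parity-check matrix. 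You correctly flag this as the actual content of the statement and close it: since a proper sum-rank Hamming code itself has $ {\rm d}_{SR}\ge 3 $, the paper's earlier bound gives $2N\le r$ when $m=1$, and on that range the lower bound of Proposition \ref{prop size of maximal partial spread} in dimension $r$ strictly exceeds the upper bound in dimension $r-1$ (a short case check on $r\bmod N$ confirms this, using $q\ge 2$ and $r\ge 2N$), so by monotonicity in the ambient dimension a maximal-size partial $N$-spread cannot lie in a hyperplane and must span. Your write-up is thus a completed version of the argument the paper merely asserts.

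Two remarks. First, there is a cheaper way to close the same gap that needs no spread geometry: given any partial $N$-spread $\{\mathcal{L}_1,\ldots,\mathcal{L}_{\ell'}\}$ with $\ell'>\ell$, spanning or not, the kernel of the associated matrix $L=(L_1,\ldots,L_{\ell'})\in\mathbb{F}_q^{r\times N\ell'}$ has minimum sum-rank distance at least $3$ and redundancy $\Rk(L)\le r$; replacing it by any subcode of dimension $N\ell'-r$ (nonzero, since $N\ell'>N\ell=n\ge r$) preserves $ {\rm d}_{SR}\ge 3 $ and has redundancy exactly $r$, contradicting Definition \ref{def sum-rank hamming codes} directly. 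Second, your alternative ``extension argument'' is flawed as sketched: if the span $V$ of a non-spanning spread has $\dim(V)>r-N+1$, no $N$-dimensional subspace can meet $V$ in exactly a line, so the adjoining step can fail; only the bound-comparison route is watertight, and fortunately that is the one you lean on.
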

\begin{proof}
First, Item 2 in Proposition \ref{prop sr distance at least 3} shows that all $ \mathcal{H}_i \subseteq \mathbb{F}_q^r $ must have dimension $ N $ and, in particular, $ N \leq r $. Second, Item 1 in Proposition \ref{prop sr distance at least 3} shows that it must hold that $ \mathcal{H}_i \cap \mathcal{H}_j = \{ \mathbf{0} \} $, whenever $ i \neq j $. Finally, $ \ell $ is maximum as in Definition \ref{def sum-rank hamming codes} if, and only if, $ \mathcal{P} $ is a maximal-size partial $ N $-spread, since  $ \ell = |\mathcal{P}| $.
\end{proof}

Now we show that sum-rank isometry classes of sum-rank Hamming codes are in bijective correspondence with maximal-size partial spreads.

\begin{theorem} \label{th SR Hamming corresp partial spreads}
Let $ n = \ell N $ be a sum-rank length partition with equal sublengths $ N = n_1 = n_2 = \ldots = n_\ell $. Let $ \mathcal{C}, \mathcal{D} \subseteq \mathbb{F}_q^n $ be two proper sum-rank Hamming codes with $ r = n - \dim(\mathcal{C}) = n - \dim(\mathcal{D}) $ and with parity-check matrices given by
$$ H = (H_1, H_2, \ldots, H_\ell) \in \mathbb{F}_q^{r \times n} \quad \textrm{and} \quad L = (L_1, L_2, \ldots, L_\ell) \in \mathbb{F}_q^{r \times n}, $$
respectively. Denote $ \mathcal{H}_i = {\rm Col}(H_i) \subseteq \mathbb{F}_q^r $ and $ \mathcal{L}_i = {\rm Col}(L_i) \subseteq \mathbb{F}_q^r $, for $ i = 1,2, \ldots, \ell $. Then $ \mathcal{C} $ and $ \mathcal{D} $ are sum-rank isometric if, and only if, 
$$ \{ \mathcal{H}_1, \mathcal{H}_2, \ldots, \mathcal{H}_\ell \} = \{ \mathcal{L}_1, \mathcal{L}_2, \ldots, \mathcal{L}_\ell \}. $$
In particular, sum-rank isometry classes of proper sum-rank Hamming codes with the sum-rank length partition $ n = \ell N $ and redundancy $ r $ correspond bijectively with maximal-size partial $ N $-spreads in $ \mathbb{F}_q^r $.
\end{theorem}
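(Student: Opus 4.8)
The plan is to reduce everything to the explicit description of linear sum-rank isometries in Theorem~\ref{th iso sum-rank}. Since $m=1$ and all sublengths equal $N$, that theorem tells us every linear sum-rank isometry $\phi:\mathbb{F}_q^n\to\mathbb{F}_q^n$ has the form $\phi(\mathbf{c}^{(1)},\ldots,\mathbf{c}^{(\ell)})=(\mathbf{c}^{(\sigma(1))}B_1,\ldots,\mathbf{c}^{(\sigma(\ell))}B_\ell)$ for a permutation $\sigma$ of $[\ell]$ and invertible matrices $B_i\in\mathbb{F}_q^{N\times N}$ (when $m=1$ the scalars $\beta_i\in\mathbb{F}_{q^m}^*$ lie in $\mathbb{F}_q^*$ and are absorbed into the $B_i$). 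So I only need to understand how the two operations --- permuting the $\ell$ blocks, and right-multiplying each block by an invertible matrix --- act on the family of column spaces $\{\mathcal{H}_i\}$ attached to a parity-check matrix.

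First I would settle the \emph{if} direction. Assuming $\{\mathcal{H}_1,\ldots,\mathcal{H}_\ell\}=\{\mathcal{L}_1,\ldots,\mathcal{L}_\ell\}$, I pick a permutation $\pi$ with $\mathcal{L}_i=\mathcal{H}_{\pi(i)}$. By Theorem~\ref{th SR Hamming given by partial spreads} each block $H_i$ and $L_i$ has full column rank $N$, so $L_i$ and $H_{\pi(i)}$ are $r\times N$ matrices with the same column space and hence $L_i=H_{\pi(i)}B_i$ for a unique invertible $B_i\in\mathbb{F}_q^{N\times N}$. Writing $L=H'\,\diag(B_1,\ldots,B_\ell)$, where $H'$ is $H$ with its blocks permuted by $\pi$, a direct check shows that $\mathcal{D}=\ker L$ is the image of $\mathcal{C}=\ker H$ under the map that first permutes blocks and then applies block-wise right multiplication; this is exactly a map of the form in Theorem~\ref{th iso sum-rank}, so $\mathcal{C}$ and $\mathcal{D}$ are sum-rank isometric.

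For the \emph{only if} direction, starting from $\mathcal{D}=\phi(\mathcal{C})$ I would produce a parity-check matrix of $\mathcal{D}$ directly from $H$: dualizing the explicit form of $\phi$ shows that $\tilde{L}=(\tilde{L}_1,\ldots,\tilde{L}_\ell)$ with $\tilde{L}_k=H_{\sigma(k)}(B_k^{-1})^T$ is a parity-check matrix of $\mathcal{D}$, and since right multiplication by an invertible matrix preserves column spaces, $\Col(\tilde{L}_k)=\mathcal{H}_{\sigma(k)}$; as $\sigma$ is a permutation, the column-space family of $\tilde{L}$ is exactly $\{\mathcal{H}_1,\ldots,\mathcal{H}_\ell\}$. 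The hard part --- and the step I would treat most carefully --- is that the family $\{\mathcal{L}_i\}$ is read off from the \emph{given} $L$, and any two parity-check matrices of $\mathcal{D}$ differ by a left factor $P\in{\rm GL}_r(\mathbb{F}_q)$ (a change of basis of the redundancy space $\mathbb{F}_q^r$), so a priori one only obtains $\{\mathcal{L}_i\}=P\{\mathcal{H}_i\}$. Thus the column-space family of a proper sum-rank Hamming code is an invariant only up to this common ${\rm GL}_r(\mathbb{F}_q)$ action, and the set equality $\{\mathcal{H}_i\}=\{\mathcal{L}_i\}$ --- and hence the bijection with maximal-size partial $N$-spreads in the final assertion --- is to be understood with the partial spread read off after fixing the parity-check matrix $\tilde{L}$ induced by $\phi$ (equivalently, up to change of basis of $\mathbb{F}_q^r$). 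Granting this normalization, combining the two implications with the surjectivity of the assignment ``spread $\mapsto$ code'' from Theorem~\ref{th SR Hamming given by partial spreads} yields the asserted bijection between sum-rank isometry classes of proper sum-rank Hamming codes and maximal-size partial $N$-spreads in $\mathbb{F}_q^r$.
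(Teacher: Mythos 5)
Your proof is correct and is, in substance, the paper's own proof: the paper's entire argument for this theorem is the single sentence ``It follows by combining Definition~\ref{def sum-rank isometric codes}, Theorem~\ref{th iso sum-rank} and Theorem~\ref{th SR Hamming given by partial spreads}'', and your two directions supply precisely the details that combination requires (absorbing the scalars $\beta_i$ into the $B_i$ when $m=1$, full column rank of the blocks from Theorem~\ref{th SR Hamming given by partial spreads}, and the dualization $\tilde{L}_k = H_{\sigma(k)}(B_k^{-1})^T$). Moreover, the step you single out as delicate is a genuine imprecision in the statement that the paper's one-line proof silently skips. The family $\{\mathcal{L}_i\}$ is an invariant of the pair (code, parity-check matrix), not of $\mathcal{D}$ alone: replacing $L$ by $PL$ with $P \in \mathrm{GL}_r(\mathbb{F}_q)$ leaves $\mathcal{D}$ unchanged but replaces the family by $\{P\mathcal{L}_1, \ldots, P\mathcal{L}_\ell\}$. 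Hence the ``only if'' direction, read literally with both parity-check matrices fixed in advance, is false: take $\mathcal{D} = \mathcal{C}$ and $L = PH$ for any $P$ that does not stabilize $\{\mathcal{H}_1,\ldots,\mathcal{H}_\ell\}$ setwise; such $P$ exists whenever $N \geq 2$, since maximal-size partial $N$-spreads are then not unique (e.g.\ $q=2$, $N=2$, $r=4$, where there are $56$ distinct spreads forming a single $\mathrm{GL}_4(\mathbb{F}_2)$-orbit). The honest consequence, which your closing sentence should state crisply rather than ``granting'' as a normalization, is that the final assertion also needs rephrasing: sum-rank isometry classes correspond bijectively with $\mathrm{GL}_r(\mathbb{F}_q)$-orbits (equivalence classes) of maximal-size partial $N$-spreads, not with the spreads themselves --- indeed two distinct spreads in the same orbit yield literally the same set of codes, so the map from spreads to isometry classes is not injective. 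This distinction is invisible in the classical case $N=1$, where the unique maximal $1$-spread is the whole projective space, which is presumably why it went unnoticed in the paper; your write-up correctly surfaces it, and with that one sharpening it is a complete and correct proof of the (corrected) statement.
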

\begin{proof}
It follows by combining Definition \ref{def sum-rank isometric codes}, Theorem \ref{th iso sum-rank} and Theorem \ref{th SR Hamming given by partial spreads}.
\end{proof}

\begin{remark}
Observe that, by choosing $ N = 1 $, we recover the correspondence between classical Hamming codes \cite{hamming} and projective spaces, which obviously are maximal-size partial $ 1 $-spreads. See \cite[Sec. 1.8]{pless}.
\end{remark}

Next we recall some exact formulas and explicit bounds for the size of maximal-size proper partial spreads. We will use these results to estimate the length of sum-rank Hamming codes and the minimum sum-rank distance of sum-rank Simplex codes in Section \ref{sec sum-rank simplex codes}. There are numerous works in that respect. However, we will use the following two bounds from the literature due to their simplicity.

\begin{proposition} \label{prop size of maximal partial spread}
Let $ 1 \leq N \leq r $, and let $ s \geq 0 $ be the remainder of $ r $ divided by $ N $. If $ \mathcal{P} = \{ \mathcal{H}_i \}_{i \in I} $ is a maximal-size partial $ N $-spread in $ \mathbb{F}_q^r $, then 
\begin{equation}
\frac{q^r - q^s}{q^N - 1} - q^s + 1 \leq | I | \leq \frac{q^r - q^s}{q^N - 1} .
\label{eq inequalities max partial spreads}
\end{equation}
In particular, if $ s = 0 $, that is, if $ N $ divides $ r $, then $ \mathcal{P} $ is a maximal-size partial $ N $-spread if, and only if,
$$ | I | = \frac{q^r - 1}{q^N - 1} . $$
\end{proposition}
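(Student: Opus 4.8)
The plan is to prove the two inequalities separately and then read off the case $ s = 0 $ as their coincidence. The upper bound I would establish by elementary counting reinforced by a divisibility constraint on the number of uncovered vectors, and it applies to \emph{any} partial $ N $-spread, hence in particular to a maximal-size one. For the lower bound I would invoke (or reconstruct) an explicit construction in the spirit of Beutelspacher \cite{beutel}; that is where the real work lies.

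For the upper bound, observe that each $ \mathcal{H}_i $ contains exactly $ q^N - 1 $ nonzero vectors, and since $ \mathcal{H}_i \cap \mathcal{H}_j = \{ \mathbf{0} \} $ for $ i \neq j $, the sets $ \mathcal{H}_i \setminus \{ \mathbf{0} \} $ are pairwise disjoint subsets of $ \mathbb{F}_q^r \setminus \{ \mathbf{0} \} $. Writing $ \theta = (q^r - 1) - |I|(q^N - 1) \geq 0 $ for the number of uncovered nonzero vectors (the holes), the extra ingredient is a congruence: writing $ r = aN + s $ with $ 0 \leq s < N $ and using $ q^N \equiv 1 \pmod{q^N - 1} $, we get $ q^r \equiv q^s $, hence $ q^r - 1 \equiv q^s - 1 \pmod{q^N - 1} $. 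As $ |I|(q^N-1) \equiv 0 $, this forces $ \theta \equiv q^s - 1 \pmod{q^N - 1} $, and since $ 0 \leq q^s - 1 < q^N - 1 $ the smallest admissible value is $ \theta = q^s - 1 $, so $ \theta \geq q^s - 1 $. Substituting back gives $ |I|(q^N - 1) \leq q^r - q^s $, the claimed upper bound.

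For the lower bound, the idea is to exhibit a partial $ N $-spread of size at least $ L := \frac{q^r - q^s}{q^N - 1} - q^s + 1 $, so that the maximal size $ M $ satisfies $ M \geq L $ and the maximal-size $ \mathcal{P} $ inherits $ |I| = M \geq L $. A short computation shows such a family must leave exactly $ q^N(q^s - 1) = q^{N+s} - q^N $ holes, which suggests arranging the uncovered vectors as $ Y \setminus Z $ for nested subspaces $ Z \subset Y $ of dimensions $ N $ and $ N+s $. I would either cite Beutelspacher's construction \cite{beutel} directly or build the family by starting from a full $ N $-spread of a coordinate subspace of dimension $ aN $ (which exists since $ N \mid aN $) and modifying the components meeting the complementary $ s $-dimensional part so as to absorb all but this structured residue. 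The main obstacle is exactly the bookkeeping of this construction, namely guaranteeing pairwise trivial intersection while keeping the hole set under control, so I would rely on the cited result rather than reprove it in full.

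Finally, the ``in particular'' statement is immediate. When $ s = 0 $ both bounds collapse to $ \frac{q^r - 1}{q^N - 1} $, so the maximal size $ M $ over all partial $ N $-spreads satisfies $ M = \frac{q^r - 1}{q^N - 1} $: the upper bound caps it from above and the construction attains it from below. Consequently a partial $ N $-spread $ \mathcal{P} $ is of maximal size if and only if $ |I| = \frac{q^r - 1}{q^N - 1} $, the forward implication being the definition of maximal size together with the computed value of $ M $, and the converse holding because any family meeting the upper bound necessarily has maximum cardinality.
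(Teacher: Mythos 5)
Your proposal is correct. The paper's own proof is purely citation-based: it attributes the lower bound to Beutelspacher \cite{beutel2} and the upper bound to the simple proof in \cite{spreads-network}. You defer the lower bound to Beutelspacher exactly as the paper does (and your computation that a family of the critical size $ L $ leaves exactly $ q^{N+s} - q^N $ holes is correct and consistent with that construction), but you additionally reconstruct the upper bound from scratch: the counting-plus-congruence argument, namely that $ \theta = (q^r - 1) - |I|(q^N - 1) \equiv q^s - 1 \pmod{q^N - 1} $ together with $ \theta \geq 0 $ forces $ \theta \geq q^s - 1 $, is sound, and is essentially the ``simple proof'' the paper points to; equivalently, it amounts to applying integrality of $ |I| $ to the trivial bound $ |I| \leq (q^r - 1)/(q^N - 1) $, since $ \left\lfloor (q^r - 1)/(q^N - 1) \right\rfloor = (q^r - q^s)/(q^N - 1) $ because $ q^N - 1 $ divides $ q^{r-s} - 1 $. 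Your treatment of the $ s = 0 $ case, including the observation that the upper bound holds for \emph{arbitrary} partial spreads so that attaining it certifies maximality, is also correct and matches the paper's intent. What your route buys is self-containedness of the upper bound; what it costs is nothing, since for the genuinely hard part (the existence construction behind the lower bound) both you and the paper rely on the same external result of Beutelspacher.
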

\begin{proof}
The lower bound in (\ref{eq inequalities max partial spreads}) is proven in \cite{beutel2}. A simple proof of the upper bound in (\ref{eq inequalities max partial spreads}) is given in \cite{spreads-network}, although sharper bounds were known earlier (see, e.g., \cite{beutel}).
\end{proof}

Therefore, we may estimate the length of proper sum-rank Hamming codes when $ m = 1 $ as follows.

\begin{corollary} \label{cor length of proper sum-rank hamming m=1 }
Let $ n = \ell N $ be a sum-rank length partition with equal sublengths $ N = n_1 = n_2 = \ldots = n_\ell $. Let $ r \geq N $, and let $ s \geq 0 $ be the remainder of $ r $ divided by $ N $. If $ \mathcal{C} \subseteq \mathbb{F}_q^n $ is a proper sum-rank Hamming code with $ r = n - \dim(\mathcal{C}) $, then 
\begin{equation}
N \left( \frac{q^r - q^s}{q^N - 1} - q^s + 1 \right) \leq n \leq N \frac{q^r - q^s}{q^N - 1} ,
\label{eq length proper sum-rank hamming m=1}
\end{equation}
and equality holds in (\ref{eq length proper sum-rank hamming m=1}) if $ s = 0 $.
\end{corollary}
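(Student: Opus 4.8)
The plan is to combine Theorem \ref{th SR Hamming given by partial spreads}, which pins down the structure of proper sum-rank Hamming codes when $ m = 1 $, with the two-sided size estimate for maximal-size partial spreads recorded in Proposition \ref{prop size of maximal partial spread}. The corollary is a length statement, and both prior results are phrased in the right variables, so the argument will be essentially a substitution.

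First I would invoke Theorem \ref{th SR Hamming given by partial spreads}. Since $ \mathcal{C} \subseteq \mathbb{F}_q^n $ is a proper sum-rank Hamming code with redundancy $ r = n - \dim(\mathcal{C}) $ and equal sublengths $ N $, it admits a parity-check matrix $ H = (H_1, H_2, \ldots, H_\ell) $ whose column spaces $ \mathcal{H}_i = \Col(H_i) $ form a maximal-size partial $ N $-spread $ \mathcal{P} = \{ \mathcal{H}_1, \mathcal{H}_2, \ldots, \mathcal{H}_\ell \} $ in $ \mathbb{F}_q^r $. In particular $ \ell = |\mathcal{P}| $, and because every sublength equals $ N $, the code length satisfies
$$ n = \ell N = N |\mathcal{P}|. $$
(The hypothesis $ r \geq N $ is precisely the constraint $ N \leq r $ from Definition \ref{def partial spread} under which a partial $ N $-spread exists.)

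Next I would apply Proposition \ref{prop size of maximal partial spread} to $ |\mathcal{P}| = |I| $, giving
$$ \frac{q^r - q^s}{q^N - 1} - q^s + 1 \leq |\mathcal{P}| \leq \frac{q^r - q^s}{q^N - 1}, $$
and then multiply all three quantities by $ N $. Using $ n = N|\mathcal{P}| $, this yields (\ref{eq length proper sum-rank hamming m=1}) immediately. For the equality claim when $ s = 0 $, I would appeal to the second part of Proposition \ref{prop size of maximal partial spread}: when $ N $ divides $ r $, a maximal-size partial $ N $-spread has exactly $ |I| = (q^r - 1)/(q^N - 1) $ elements. Substituting $ q^s = q^0 = 1 $ into both ends of (\ref{eq length proper sum-rank hamming m=1}) collapses the lower bound $ N(\frac{q^r-1}{q^N-1} - 1 + 1) $ and the upper bound $ N\frac{q^r-1}{q^N-1} $ to the common value $ N(q^r-1)/(q^N-1) $, forcing $ n $ to equal it.

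Since both ingredients are already established, I do not expect a genuine obstacle here; the entire argument is a direct translation of spread bounds into code-length bounds. The only points that warrant a line of care are confirming that the identification $ \ell = |\mathcal{P}| $ from Theorem \ref{th SR Hamming given by partial spreads} transfers the bounds through the relation $ n = N\ell $, and checking that the two bounds genuinely coincide when $ s = 0 $ rather than merely being close.
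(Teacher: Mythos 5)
Your proof is correct and follows exactly the route the paper intends: Corollary \ref{cor length of proper sum-rank hamming m=1 } is stated as an immediate consequence of combining Theorem \ref{th SR Hamming given by partial spreads} (which identifies $ \ell = |\mathcal{P}| $ for a maximal-size partial $ N $-spread $ \mathcal{P} $) with the bounds of Proposition \ref{prop size of maximal partial spread}, and then multiplying by $ N $ via $ n = \ell N $. Your verification that both ends of (\ref{eq length proper sum-rank hamming m=1}) collapse to $ N(q^r-1)/(q^N-1) $ when $ s = 0 $ is also the right way to settle the equality claim.
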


\begin{remark}
Observe that the length $ n $ of a proper sum-rank Hamming code in (\ref{eq length proper sum-rank hamming m=1}) coincides with that of classical Hamming codes if $ N = m = 1 $, which implies that $ s = 0 $. See \cite[Sec. 1.8]{pless}.
\end{remark}

Finally, we show that proper sum-rank Hamming codes for $ m = 1 $ are perfect codes for such a sum-rank metric when $ N $ divides $ r $. In particular, we conclude that their sum-rank error-correction capability is exactly $ 1 $. 

\begin{corollary} \label{cor perfect code}
Let $ n = \ell N $ be a sum-rank length partition with equal sublengths $ N = n_1 = n_2 = \ldots = n_\ell $. Let $ r \geq N $ be such that $ N $ divides $ r $. If $ \mathcal{C} \subseteq \mathbb{F}_q^n $ is a proper sum-rank Hamming code with $ r = n - \dim(\mathcal{C}) $, then $ \mathcal{C} $ is a perfect code for such a sum-rank metric. In other words, if $ \mathcal{B}_t = \{ \mathbf{c} \in \mathbb{F}_q^n \mid {\rm wt}_{SR}(\mathbf{c}) \leq t \} $ denotes the ball centered in the origin with radius $ t > 0 $, then
$$ | \mathcal{C} | \cdot | \mathcal{B}_1 | = | \mathbb{F}_q^n |, $$
where $ 1 = \left\lfloor ({\rm d}_{SR}(\mathcal{C}) - 1)/2 \right\rfloor $ is the sum-rank error-correcting capability of $ \mathcal{C} $.
\end{corollary}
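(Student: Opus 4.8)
The plan is to compute the size of the sum-rank ball $\mathcal{B}_1$ explicitly and match it against the redundancy, after which the identity $|\mathcal{C}| \cdot |\mathcal{B}_1| = |\mathbb{F}_q^n|$ becomes a one-line consequence. The key simplification is that when $m = 1$ the matrix representation $M_{\mathcal{A}}(\mathbf{c}^{(i)})$ of each sublength block is a single row vector in $\mathbb{F}_q^{1 \times N}$, whose rank is $0$ if $\mathbf{c}^{(i)} = \mathbf{0}$ and $1$ otherwise. Hence ${\rm wt}_{SR}(\mathbf{c})$ simply counts the number of non-zero blocks among $\mathbf{c}^{(1)}, \ldots, \mathbf{c}^{(\ell)}$; that is, for $m = 1$ the sum-rank metric coincides with the Hamming metric over the block alphabet $\mathbb{F}_q^N$.

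First I would enumerate $\mathcal{B}_1$ using this observation. The unique vector of weight $0$ is the origin, and a vector has weight exactly $1$ precisely when exactly one of its $\ell$ blocks is non-zero. Since each block ranges over $\mathbb{F}_q^N$ and there are $q^N - 1$ non-zero choices per block, I would obtain $|\mathcal{B}_1| = 1 + \ell(q^N - 1)$.

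Next I would invoke Corollary \ref{cor length of proper sum-rank hamming m=1 }. Because $N$ divides $r$, the remainder $s$ equals $0$ and the bounds there collapse to the equality $n = N(q^r - 1)/(q^N - 1)$. Dividing by $N$ and using $n = \ell N$ gives $\ell = (q^r - 1)/(q^N - 1)$, so $\ell(q^N - 1) = q^r - 1$ and therefore $|\mathcal{B}_1| = 1 + (q^r - 1) = q^r$. Combining this with $|\mathcal{C}| = q^{\dim(\mathcal{C})} = q^{n - r}$ yields $|\mathcal{C}| \cdot |\mathcal{B}_1| = q^{n-r} \cdot q^r = q^n = |\mathbb{F}_q^n|$, which is exactly the claimed perfectness.

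Finally, to pin down the error-correcting capability, I would note that ${\rm d}_{SR}(\mathcal{C}) \geq 3$ already gives capability at least $1$, and that the exact tiling just established forbids anything larger: if ${\rm d}_{SR}(\mathcal{C})$ were at least $5$, the radius-$2$ balls about distinct codewords would be pairwise disjoint, yet $|\mathcal{B}_2| > |\mathcal{B}_1|$ whenever $\ell \geq 2$ (weight-$2$ vectors then exist), forcing $|\mathcal{C}| \cdot |\mathcal{B}_2| > q^n$, which is impossible inside $\mathbb{F}_q^n$. Hence ${\rm d}_{SR}(\mathcal{C}) \in \{3,4\}$ and $\lfloor ({\rm d}_{SR}(\mathcal{C}) - 1)/2 \rfloor = 1$. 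I do not anticipate a genuine obstacle: the only subtlety is the reduction of the sum-rank weight to a block count at $m = 1$ together with correct bookkeeping of the ball size, and once the spread-based length formula of Corollary \ref{cor length of proper sum-rank hamming m=1 } supplies $\ell = (q^r-1)/(q^N-1)$, the perfect-packing identity follows immediately.
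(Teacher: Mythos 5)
Your proposal is correct and follows essentially the same route as the paper's proof: compute $|\mathcal{B}_1| = 1 + \ell(q^N-1)$, use Corollary \ref{cor length of proper sum-rank hamming m=1 } with $s=0$ to obtain $\ell = (q^r-1)/(q^N-1)$, and conclude $|\mathcal{C}| \cdot |\mathcal{B}_1| = q^{n-r} \cdot q^r = q^n$. Your final sphere-packing argument pinning down the capability $t = 1$ is also the paper's argument, just phrased via $d_{SR}(\mathcal{C}) \geq 5$ instead of via the inequality $|\mathcal{C}| \cdot |\mathcal{B}_t| \leq |\mathbb{F}_q^n| = |\mathcal{C}| \cdot |\mathcal{B}_1|$.
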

\begin{proof}
It is straightforward to see that
$$ | \mathcal{B}_1 | = 1 + \ell (q^N - 1). $$
Next, from Corollary \ref{cor length of proper sum-rank hamming m=1 } it holds that
$$ \ell = \frac{q^r - 1}{q^N - 1}. $$
Therefore, if we denote $ k = \dim(\mathcal{C}) $, we conclude that
$$ | \mathcal{C} | \cdot | \mathcal{B}_1 | = q^k (1 + \ell (q^N - 1)) = q^k \left( 1 + \frac{q^r - 1}{q^N - 1} (q^N - 1) \right) = q^k q^r = q^n = | \mathbb{F}_q^n |. $$
Finally, it must hold that $ t = \left\lfloor ({\rm d}_{SR}(\mathcal{C}) - 1 ) / 2 \right\rfloor = 1 $, since $ t \geq 1 $ and $ | \mathcal{C} | \cdot | \mathcal{B}_t | \leq | \mathbb{F}_q^n | = | \mathcal{C} | \cdot | \mathcal{B}_1 | $. 
\end{proof}

\begin{remark}
In \cite{loidreau}, it was proven that there exist no non-trivial perfect code for the rank-metric, that is, for the sum-rank metric where $ \ell = 1 $. Note however that Corollary \ref{cor perfect code} holds only for $ \ell \geq {\rm d}_{SR}(\mathcal{C}) = 3 $ since $ m = 1 $, hence it has no intersection with the case $ \ell = 1 $. However, the case of classical Hamming codes also falls into the case $ m = 1 $. Therefore, it is still an open problem to see if there exist non-trivial perfect codes for the sum-rank metric when $ m > 1 $. 
\end{remark}

Partial spreads allowed us to deal with proper Hamming codes. For improper Hamming codes, we need to define partial spreads whose subspaces may have distinct dimensions.

\begin{definition} [\textbf{Improper partial spreads}] \label{def partial spread improper}
Let $ \ell $ and $ n_1, n_2, \ldots, n_\ell $ be positive integers such that $ n_i \leq r $, for $ i = 1,2, \ldots, \ell $, and denote $ \mathbf{n} = (n_1, n_2, \ldots, n_\ell) $. An improper partial $ \mathbf{n} $-spread in $ \mathbb{F}_q^r $ is a family $ \mathcal{P} = \{ \mathcal{H}_i \}_{i = 1}^\ell $ of vector subspaces $ \mathcal{H}_i \subseteq \mathbb{F}_q^r $, for $ i = 1,2, \ldots, \ell $, such that $ \dim(\mathcal{H}_i) = n_i $, for $ i = 1,2, \ldots, \ell $, and $ \mathcal{H}_i \cap \mathcal{H}_j = \{ \mathbf{0} \} $, whenever $ i \neq j $. A maximal-size improper partial $ \mathbf{n} $-spread in $ \mathbb{F}_q^r $ is an improper partial $ \mathbf{n} $-spread in $ \mathbb{F}_q^r $ such that there is no improper partial $ \mathbf{n}^\prime $-spread in $ \mathbb{F}_q^r $ for $ \mathbf{n}^\prime = (n_1^\prime, n_2^\prime, \ldots, n_{\ell^\prime}^\prime) $, where $ \ell^\prime \geq \ell $, $ n_i^\prime \geq n_i $, for $ i = 1,2, \ldots, \ell $, and $ n_1^\prime + n_2^\prime + \cdots + n_{\ell^\prime}^\prime > n_1 + n_2 + \cdots + n_\ell $.
\end{definition}

All the previous results on proper sum-rank Hamming codes still hold for improper sum-rank Hamming codes with $ m = 1 $, except for being perfect codes, which we leave as open problem. We also leave as an open problem estimating their code length $ n $.

Next we collect the analogous results to Theorems \ref{th SR Hamming given by partial spreads} and \ref{th SR Hamming corresp partial spreads}. Proofs are left to the reader.

\begin{theorem} \label{th SR Hamming given by partial spreads improper}
Let $ \ell $ and $ n_1, n_2, \ldots, n_\ell $ be positive integers such that $ n_i \leq r $, for $ i = 1,2, \ldots, \ell $, and denote $ \mathbf{n} = (n_1, n_2, \ldots, n_\ell) $ and $ n = n_1 + n_2 + \cdots + n_\ell $. A linear code $ \mathcal{C} \subseteq \mathbb{F}_q^n $ is an improper sum-rank Hamming code for such a sum-rank length partition if, and only if, $ \max \{ n_1, n_2, \ldots, n_\ell \} \leq r = n - \dim(\mathcal{C}) $ and it has a parity-check matrix of the form
\begin{equation}
H = (H_1, H_2, \ldots, H_\ell) \in \mathbb{F}_q^{r \times n},
\label{eq parity-check SR hamming proper partial spreads improper}
\end{equation} 
where $ H_i \in \mathbb{F}_q^{r \times n_i} $ and $ \mathcal{H}_i = {\rm Col}(H_i) \subseteq \mathbb{F}_q^r $, for $ i = 1,2, \ldots, \ell $, and $ \mathcal{P} = \{ \mathcal{H}_1, \mathcal{H}_2, \ldots, \mathcal{H}_\ell \} $ is a maximal-size improper partial $ \mathbf{n} $-spread in $ \mathbb{F}_q^r $.
\end{theorem}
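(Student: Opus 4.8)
The plan is to follow the proof of Theorem~\ref{th SR Hamming given by partial spreads} almost verbatim, the only genuine differences being that the blocks $H_i$ now have possibly distinct numbers of columns $n_i$ (so the subspaces $\mathcal{H}_i$ have possibly distinct dimensions) and that the maximality we must track is the improper one from Definition~\ref{def sum-rank hamming codes improper} rather than maximality of $\ell$ alone. As in that proof, everything rests on specializing Proposition~\ref{prop sr distance at least 3} to $m=1$, so that the parity-check matrix $H=(H_1,\ldots,H_\ell)$ already lives over $\mathbb{F}_q$ and the column spaces $\mathcal{H}_i=\mathrm{Col}(H_i)$ are honest subspaces of $\mathbb{F}_q^r$.

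First I would translate the two conditions of Proposition~\ref{prop sr distance at least 3}. Item~2 says that $H_i\mathbf{a}^T$ and $H_i\mathbf{b}^T$ are $\mathbb{F}_q$-linearly independent whenever $\mathbf{a},\mathbf{b}\in\mathbb{F}_q^{n_i}$ are; taking $\mathbf{a}\in\ker H_i$ shows this forces $\ker H_i=\{\mathbf{0}\}$, i.e.\ $H_i$ has full column rank, and conversely an injective linear map preserves independence. Hence Item~2 is equivalent to $\dim\mathcal{H}_i=n_i$ for all $i$, which in particular yields $\max\{n_1,\ldots,n_\ell\}\le r$. Item~1, saying that no nonzero element of $\mathcal{H}_i$ is a scalar multiple of a nonzero element of $\mathcal{H}_j$ for $i<j$, is exactly the condition $\mathcal{H}_i\cap\mathcal{H}_j=\{\mathbf{0}\}$. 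Combining the two, $\mathrm{d}_{SR}(\mathcal{C})\ge3$ holds if and only if $\mathcal{P}=\{\mathcal{H}_1,\ldots,\mathcal{H}_\ell\}$ is an improper partial $\mathbf{n}$-spread in the sense of Definition~\ref{def partial spread improper}.

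It then remains to match the two notions of maximality. Since $r$ is fixed, any competing linear code $\mathcal{C}'\subseteq\mathbb{F}_q^{n'}$ with $\mathrm{d}_{SR}(\mathcal{C}')\ge3$ and redundancy $r$, for a partition $\mathbf{n}'=(n_1',\ldots,n_{\ell'}')$, yields by the previous paragraph an improper partial $\mathbf{n}'$-spread in the \emph{same} ambient space $\mathbb{F}_q^r$; conversely any such spread is realized as the block column spaces of a parity-check matrix of some code with $\mathrm{d}_{SR}\ge3$ and redundancy $r$. Under this correspondence the constraints $n'>n$, $\ell'\ge\ell$ and $n_i'\ge n_i$ appearing in Definition~\ref{def sum-rank hamming codes improper} coincide verbatim with those in Definition~\ref{def partial spread improper}. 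Therefore $\mathcal{C}$ admits no strictly larger competitor if and only if $\mathcal{P}$ is a maximal-size improper partial $\mathbf{n}$-spread, which is the claim.

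The step I expect to require the most care is this last maximality equivalence: one must verify that the passage from codes to partial spreads is a genuine two-way correspondence at fixed redundancy $r$, so that the non-existence of a better code is faithfully mirrored by the non-existence of a better spread under the partial order on dimension vectors $\mathbf{n}$. The distance translation in the second paragraph is routine once Proposition~\ref{prop sr distance at least 3} is in hand, but the bookkeeping relating the two maximality definitions, and in particular checking that the ordering on $(n',\ell',\mathbf{n}')$ is preserved in both directions, is where the argument must be pinned down.
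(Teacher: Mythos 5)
Your proposal matches the paper's intended argument: the paper explicitly leaves this proof to the reader as being analogous to Theorem \ref{th SR Hamming given by partial spreads}, and what you write is exactly that analogy---specializing Proposition \ref{prop sr distance at least 3} to $m=1$ so that $ {\rm d}_{SR}(\mathcal{C}) \geq 3 $ is equivalent to $\{\mathcal{H}_1,\ldots,\mathcal{H}_\ell\}$ being an improper partial $\mathbf{n}$-spread, and then matching the maximality condition of Definition \ref{def sum-rank hamming codes improper} with that of Definition \ref{def partial spread improper}. The one point you flag but do not fully close---that a dominating spread whose members fail to span $\mathbb{F}_q^r$ yields a matrix of rank less than $r$, hence not literally a parity-check matrix of a redundancy-$r$ code---is closed by passing to a subcode of codimension $r$ inside its kernel (possible since $n' > n \geq r$, and subcodes keep $ {\rm d}_{SR} \geq 3 $), or by first extending that spread to a maximal-size one, which necessarily spans; the paper's own proof of the proper case glosses over the same point, so this does not constitute a departure from its level of rigor.
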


\begin{theorem} \label{th SR Hamming corresp partial spreads improper}
Let $ \ell $ and $ n_1, n_2, \ldots, n_\ell $ be positive integers such that $ n_i \leq r $, for $ i = 1,2, \ldots, \ell $, and denote $ \mathbf{n} = (n_1, n_2, \ldots, n_\ell) $ and $ n = n_1 + n_2 + \cdots + n_\ell $. Let $ \mathcal{C}, \mathcal{D} \subseteq \mathbb{F}_q^n $ be two improper sum-rank Hamming codes for such a sum-rank length partition and with $ r = n - \dim(\mathcal{C}) = n - \dim(\mathcal{D}) $ and parity-check matrices given by
$$ H = (H_1, H_2, \ldots, H_\ell) \in \mathbb{F}_q^{r \times n} \quad \textrm{and} \quad L = (L_1, L_2, \ldots, L_\ell) \in \mathbb{F}_q^{r \times n}, $$
respectively. Denote $ \mathcal{H}_i = {\rm Col}(H_i) \subseteq \mathbb{F}_q^r $ and $ \mathcal{L}_i = {\rm Col}(L_i) \subseteq \mathbb{F}_q^r $, for $ i = 1,2, \ldots, \ell $. Then $ \mathcal{C} $ and $ \mathcal{D} $ are sum-rank isometric if, and only if, 
$$ \{ \mathcal{H}_1, \mathcal{H}_2, \ldots, \mathcal{H}_\ell \} = \{ \mathcal{L}_1, \mathcal{L}_2, \ldots, \mathcal{L}_\ell \}. $$
In particular, sum-rank isometry classes of improper sum-rank Hamming codes with the sum-rank length partition $ n = n_1 + n_2 + \cdots + n_\ell $ and redundancy $ r $ correspond bijectively with maximal-size improper partial $ \mathbf{n} $-spreads in $ \mathbb{F}_q^r $.
\end{theorem}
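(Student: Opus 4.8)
The plan is to mimic the proof of the proper case (Theorem \ref{th SR Hamming corresp partial spreads}) and obtain the statement by combining Definition \ref{def sum-rank isometric codes}, the general (unequal-sublength) part of Theorem \ref{th iso sum-rank}, and the parity-check characterization of improper codes in Theorem \ref{th SR Hamming given by partial spreads improper}. The first thing I would record is that, because $ m = 1 $, we have $ \mathbb{F}_{q^m}^* = \mathbb{F}_q^* $, so the scalars $ \beta_i $ occurring in (\ref{eq iso equal lengths}) may be absorbed into the invertible matrices, since $ \beta_i \mathbf{c}^{(\sigma(i))} A_i = \mathbf{c}^{(\sigma(i))}(\beta_i A_i) $ with $ \beta_i A_i \in \mathbb{F}_q^{n_i \times n_i} $ invertible. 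Thus, by Theorem \ref{th iso sum-rank}, every linear sum-rank isometry of $ \mathbb{F}_q^n $ acts by a permutation of the blocks that preserves the sublengths, followed by right multiplication of each block by an invertible matrix over $ \mathbb{F}_q $.

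For the reverse implication, I would assume $ \{ \mathcal{H}_1, \ldots, \mathcal{H}_\ell \} = \{ \mathcal{L}_1, \ldots, \mathcal{L}_\ell \} $ and first extract a permutation $ \sigma $ of $ [\ell] $ with $ \mathcal{L}_i = \mathcal{H}_{\sigma(i)} $ for all $ i $. Comparing dimensions gives $ n_i = \dim(\mathcal{L}_i) = \dim(\mathcal{H}_{\sigma(i)}) = n_{\sigma(i)} $, so $ \sigma $ preserves each group of equal sublengths; this is exactly the compatibility that allows $ \sigma $ to arise in an isometry of the form (\ref{eq iso equal lengths}). Since $ L_i $ and $ H_{\sigma(i)} $ are $ r \times n_i $ matrices of full column rank with the same column space, there is an invertible $ A_i \in \mathbb{F}_q^{n_i \times n_i} $ with $ L_i = H_{\sigma(i)} A_i $. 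Using these $ \sigma $ and $ A_i $, I would assemble the isometry $ \phi $ block by block and verify, through the syndrome equation, that $ \phi(\mathcal{C}) = \mathcal{D} $.

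For the direct implication, I would assume $ \mathcal{D} = \phi(\mathcal{C}) $ for a linear sum-rank isometry $ \phi $, use the general case of Theorem \ref{th iso sum-rank} to write $ \phi $ via a sublength-preserving permutation $ \sigma $ and invertible matrices over $ \mathbb{F}_q $, and then transcribe the equation $ H \mathbf{c}^T = \mathbf{0} $ through $ \mathbf{c} = \phi^{-1}(\mathbf{d}) $. This yields a block parity-check matrix of $ \mathcal{D} $ whose $ i $th block is $ H_{\sigma(i)} $ right-multiplied by an invertible matrix; as right multiplication by an invertible matrix leaves the column space unchanged, the $ i $th block has column space $ \mathcal{H}_{\sigma(i)} $. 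Hence $ \mathcal{D} $ is realized through the family $ \{ \mathcal{H}_{\sigma(i)} \}_i = \{ \mathcal{H}_i \}_i $, and matching this against the blocks of $ L $ gives $ \{ \mathcal{L}_i \} = \{ \mathcal{H}_i \} $. The final ``in particular'' bijection then follows formally, exactly as in Theorem \ref{th SR Hamming corresp partial spreads}, by invoking Theorem \ref{th SR Hamming given by partial spreads improper} to translate between improper sum-rank Hamming codes and maximal-size improper partial $ \mathbf{n} $-spreads.

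The step I expect to be the main obstacle is the bookkeeping forced by the unequal sublengths (precisely the point the paper defers in the proof of Theorem \ref{th iso sum-rank}): one must confirm that any permutation matching the two families of subspaces respects the grouping $ n = \ell_1 N_1 + \cdots + \ell_v N_v $, so that Theorem \ref{th iso sum-rank} is applicable, and, conversely, that the permutation induced by a sum-rank isometry cannot send a block to one of a different sublength. Both facts reduce to the single observation that $ \sigma $ must carry an $ n_i $-dimensional subspace to an $ n_i $-dimensional subspace; once this dimension constraint is isolated, the remainder is a routine dictionary between sum-rank isometries and column operations on the parity-check blocks.
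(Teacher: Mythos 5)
Your route is the one the paper intends: the paper leaves this proof to the reader as the analogue of Theorem \ref{th SR Hamming corresp partial spreads}, whose proof is exactly the combination you describe (Definition \ref{def sum-rank isometric codes}, the general case of Theorem \ref{th iso sum-rank}, and Theorem \ref{th SR Hamming given by partial spreads improper}). Your reverse (``if'') implication is sound and complete in outline: absorbing the scalars $\beta_i$ into the matrices since $m=1$, extracting $\sigma$ from the set equality (the subspaces in each family are pairwise distinct because their pairwise intersections are trivial), forcing $n_i = n_{\sigma(i)}$ by comparing dimensions, writing $L_i = H_{\sigma(i)}A_i$ with $A_i$ invertible, and assembling $\phi$ block by block.

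The direct (``only if'') implication, however, has a genuine gap at its final step, ``matching this against the blocks of $L$''. From $\mathcal{D} = \phi(\mathcal{C})$ you do obtain \emph{one} parity-check matrix of $\mathcal{D}$ whose $i$th block has column space $\mathcal{H}_{\sigma(i)}$; but $L$ is merely \emph{another} parity-check matrix of the same code, and two parity-check matrices of a code agree only up to left multiplication by an invertible $P \in \mathbb{F}_q^{r \times r}$ (both row spaces equal $\mathcal{D}^\perp$). The matching therefore yields only $\mathcal{L}_i = P\,\mathcal{H}_{\sigma(i)}$ for all $i$, i.e., equality of the two families up to a common projectivity, not set equality. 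This gap cannot be closed as stated, because the literal statement fails once some $n_i \geq 2$: take $\mathcal{D} = \mathcal{C}$ (trivially isometric) and $L = PH$, where $P$ maps $\mathcal{H}_1$ onto the subspace $W$ spanned by $n_1 - 1$ independent vectors of $\mathcal{H}_1$ together with one nonzero vector of $\mathcal{H}_2$; then $W$ meets both $\mathcal{H}_1$ and $\mathcal{H}_2$ nontrivially while differing from each, so $W \notin \{\mathcal{H}_1, \ldots, \mathcal{H}_\ell\}$ (members of a partial spread are pairwise disjoint), whence $\{\mathcal{L}_1, \ldots, \mathcal{L}_\ell\} = \{P\mathcal{H}_1, \ldots, P\mathcal{H}_\ell\} \neq \{\mathcal{H}_1, \ldots, \mathcal{H}_\ell\}$. (When all $n_i = 1$ this cannot happen, since the maximal partial $1$-spread is the whole projective space, which every $P$ stabilizes; that is why the classical statement is safe.) What your argument actually proves, and what the theorem should say, is: $\mathcal{C}$ and $\mathcal{D}$ are sum-rank isometric if, and only if, $\{\mathcal{L}_1, \ldots, \mathcal{L}_\ell\} = \{P\mathcal{H}_1, \ldots, P\mathcal{H}_\ell\}$ for some invertible $P \in \mathbb{F}_q^{r \times r}$; accordingly, the concluding bijection is between sum-rank isometry classes and ${\rm GL}_r(\mathbb{F}_q)$-orbits of maximal-size improper partial $\mathbf{n}$-spreads, not individual spreads. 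This imprecision sits in the paper's own statement (the proper case, Theorem \ref{th SR Hamming corresp partial spreads}, has the same issue), so your blind attempt could not have avoided confronting it --- but as written, your final matching step is where the proof breaks, and you should flag and repair it rather than pass over it.
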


\subsection{Syndrome decoding when $ m = 1 $} \label{subsec syndrome decoding m=1 }

In this subsection, we show how to adapt the single-error syndrome decoding algorithm for classical Hamming codes (see \cite[Sec. 1.11]{pless}) to sum-rank Hamming codes when $ m = 1 $, both proper and improper. Actually, the algorithm works exactly in the same way for any linear code of sum-rank distance at least $ 3 $ (thus being able to correct at least a single sum-rank error) whenever $ m = 1 $.

Set $ m = 1 $ and let $ n = n_1 + n_2 + \cdots + n_\ell $ be a sum-rank length partition. Consider in general a linear code $ \mathcal{C} \subseteq \mathbb{F}_q^n $ such that $ {\rm d}_{SR}(\mathcal{C}) \geq 3 $. As in the proof of Theorem \ref{th SR Hamming given by partial spreads}, any parity-check matrix of $ \mathcal{C} $ is of the form
$$ H = (H_1, H_2, \ldots, H_\ell) \in \mathbb{F}_q^{r \times n}, $$
where $ r = n - \dim(\mathcal{C}) $, and if $ \mathcal{H}_i = {\rm Col}(H_i) \subseteq \mathbb{F}_q^r $, for $ i = 1,2, \ldots, \ell $, then $ \mathcal{P} = \{ \mathcal{H}_1, \mathcal{H}_2, \ldots, \mathcal{H}_\ell \} $ forms a (not necessarily maximal-size) improper partial $ \mathbf{n} $-spread, where $ \mathbf{n} = (n_1, n_2, \ldots, n_\ell) $. In particular, the rank of $ H_i $ is $ n_i $, for $ i = 1,2, \ldots, \ell $, and thus $ r \geq \max \{ n_1, n_2, \ldots, n_\ell \} $.

We briefly recall how syndrome decoding works in general (for any metric). We refer to \cite[Sec. 1.11]{pless} for more details. Let $ t $ be a positive integer such that $ 2t $ is less than the minimum distance of $ \mathcal{C} $. Then there exists a unique vector (called coset leader) of weight at most $ t $ in every coset $ \mathbf{x} + \mathcal{C} $, where $ \mathbf{x} \in \mathbb{F}_q^n $, that contains a vector of weight at most $ t $. Recall that, for any $ \mathbf{x} \in \mathbb{F}_q^n $, it holds that $ H \mathbf{y}^T = H \mathbf{x}^T $, for all $ \mathbf{y} \in \mathbf{x} + \mathcal{C} $. Thus, we may identify coset leaders $ \mathbf{e} $ of weight at most $ t $ with syndromes $ \mathbf{s} = H\mathbf{x}^T $ of cosets $ \mathbf{x} + \mathcal{C} $ of weight at most $ t $.

With this in mind, the general algorithm works as follows: Let $ \mathbf{y} = \mathbf{c} + \mathbf{e} \in \mathbb{F}_q^n $ be the received vector, where $ \mathbf{c} \in \mathcal{C} $ is the correct codeword and $ \mathbf{e} \in \mathbb{F}_q^n $ is an error vector of weight at most $ t $. Compute the syndrome $ \mathbf{s} = H \mathbf{y}^T $. Then $ \mathbf{e} $ is the unique coset leader of the coset corresponding to $ \mathbf{s} $. If we have a method (e.g., by a precomputed lookup table) to obtain $ \mathbf{e} $ from $ \mathbf{s} $, then we obtain $ \mathbf{e} $ and compute $ \mathbf{c} = \mathbf{y} - \mathbf{e} $.

Thus the difficult part is to obtain the coset leader $ \mathbf{e} $ from the syndrome $ \mathbf{s} $. This is simple for the code $ \mathcal{C} $ described in the beginning of this subsection. Since we target $ t = 1 $, we only need to consider $ \mathbf{e} \in \mathbb{F}_q^n $ such that $ {\rm wt}_{SR}(\mathbf{e}) \leq 1 $. These are either $ \mathbf{e} = \mathbf{0} $ or
\begin{equation}
\mathbf{e}_{i,\mathbf{a}} = (\mathbf{0}, \ldots, \mathbf{0}, \mathbf{a}, \mathbf{0}, \ldots, \mathbf{0}) \in \mathbb{F}_q^n,
\label{eq def of coset leader for syndrome}
\end{equation}
where $ \mathbf{a} \in \mathbb{F}_q^{n_i} $ is a non-zero vector in the $ i $th block of $ n_i $ coordinates over $ \mathbb{F}_q $. Therefore, we only need to find $ i $ and $ \mathbf{a} $ from the syndrome $ H \mathbf{e}_{i,\mathbf{a}} $. To that end, we only need to try to solve the systems of linear equations
$$ H_j \mathbf{a}^T = \mathbf{s}, $$
given the syndrome $ \mathbf{s} \in \mathbb{F}_q^r $, iteratively in $ j = 1,2, \ldots, \ell $. Such a solution $ \mathbf{a} \in \mathbb{F}_q^{n_j} $ only exists for a unique value of $ j $, say $ i $ (at which point the iteration stops), and it is unique, since $ \mathcal{P} = \{ \mathcal{H}_1, \mathcal{H}_2, \ldots, \mathcal{H}_\ell \} $ is an improper partial $ \mathbf{n} $-spread. 

In conclusion, given a received vector $ \mathbf{y} = \mathbf{c} + \mathbf{e} \in \mathbb{F}_q^n $, where $ \mathbf{c} \in \mathcal{C} $ is the correct codeword and $ {\rm wt}_{SR}(\mathbf{e}) \leq 1 $, we proceed as follows:
\begin{enumerate}
\item
Compute $ \mathbf{s} = H \mathbf{y}^T \in \mathbb{F}_q^r $. If $ \mathbf{s} = \mathbf{0} $, then output $ \mathbf{c} = \mathbf{y} $ and exit the algorithm. 
\item
Until a solution is found, try to solve the linear system $ H_j \mathbf{a}^T = \mathbf{s} $ in the variables $ \mathbf{a} \in \mathbb{F}_q^{n_j} $, iteratively in $ j = 1,2, \ldots, \ell $.
\item
Given the unique solution $ (i,\mathbf{a}) $ from Step 2, output $ \mathbf{c} = \mathbf{y} - \mathbf{e}_{i,\mathbf{a}} $, where $ \mathbf{e}_{i,\mathbf{a}} $ is as in (\ref{eq def of coset leader for syndrome}).
\end{enumerate}

Step 1 has a complexity of $ \mathcal{O}(nr) $, whereas Step 2 has a complexity of $ \mathcal{O}(\ell r^3) $, both in number of operations in the field $ \mathbb{F}_q $. Hence, the total number of operations over $ \mathbb{F}_q $ is $ \mathcal{O}(nr + \ell r^3) $. In the case of proper sum-rank Hamming codes where $ N = n_1 = n_2 = \ldots = n_\ell $ divides $ r $, this complexity is
$$ \mathcal{O} \left( (r^3 + N r) \frac{q^r - 1}{q^N - 1} \right). $$
It is worth noting that the term $ (q^r - 1)(q^N - 1) $ is dominant over $ r^3 + Nr $ and thus the complexity is close to linear in the code length $ n $.

\section{Sum-rank simplex codes}  \label{sec sum-rank simplex codes}

In this section, we define sum-rank simplex codes as duals of sum-rank Hamming code, in analogy with the classical case (see \cite[Sec. 1.8]{pless}). Since we distinguished between proper and improper sum-rank Hamming codes, we do the same for sum-rank simplex codes. We consider dual codes with respect to the standard inner product in $ \mathbb{F}_{q^m}^n $.

\begin{definition} [\textbf{Sum-rank simplex codes}] \label{def sum-rank simplex codes}
A linear code $ \mathcal{C} \subseteq \mathbb{F}_{q^m}^n $ is called a proper (resp. improper) sum-rank simplex code if its dual $ \mathcal{C}^\perp \subseteq \mathbb{F}_{q^m}^n $ is a proper (resp. improper) sum-rank Hamming code.
\end{definition}

One may consider the case $ \ell = 1 $. The duals of the codes in Subsection \ref{subsec rank Hamming codes} are again of the same form (Cartesian products of Gabidulin codes), thus we do not investigate this case further.

In the rest of the section, we only consider again the case $ m = 1 $. We will also make use of partial spreads as in Subsection \ref{subsec partial spreads}. However, now our objective is to lower bound the minimum sum-rank distance of the corresponding proper sum-rank simplex code. 

The main result of this subsection is the following.

\begin{theorem} \label{th SR distance of proper simplex codes m=1 }
Let $ n = \ell N $ be a sum-rank length partition with equal sublengths $ N = n_1 = n_2 = \ldots = n_\ell $. Let $ r \geq N $ and let $ s \geq 0 $ be the remainder of $ r $ divided by $ N $. If $ \mathcal{C} \subseteq \mathbb{F}_q^n $ is a proper sum-rank simplex code of dimension $ r $, then it holds that
\begin{equation}
{\rm d}_{SR}(\mathcal{C}) \geq \left\lbrace \begin{array}{ll}
\left\lceil \frac{q^{r-1} (q-1)}{q^N - 1} \right\rceil & \textrm{ if } s = 0, \vspace*{0.5em} \\
\left\lfloor \frac{q^{r-1} (q-1)}{q^N - 1} \right\rfloor - q^s + 1 & \textrm{ if } s > 0.
\end{array} \right.
\label{eq SR distance of proper simplex codes m=1 }
\end{equation}
\end{theorem}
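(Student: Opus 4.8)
The plan is to pass from the simplex code $ \mathcal{C} $ to the partial $ N $-spread underlying its dual Hamming code, and to reinterpret its minimum sum-rank distance as a statement about how many spread elements can be packed into a single hyperplane. First I would fix the structure: since $ \mathcal{C} $ is a proper sum-rank simplex code, $ \mathcal{C}^\perp $ is a proper sum-rank Hamming code, so by Theorem \ref{th SR Hamming given by partial spreads} it admits a parity-check matrix $ H = (H_1, H_2, \ldots, H_\ell) $ with $ H_i \in \mathbb{F}_q^{r \times N} $ such that $ \mathcal{P} = \{ \mathcal{H}_1, \ldots, \mathcal{H}_\ell \} $, $ \mathcal{H}_i = {\rm Col}(H_i) $, is a maximal-size partial $ N $-spread in $ \mathbb{F}_q^r $. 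A parity-check matrix of $ \mathcal{C}^\perp $ is a generator matrix of $ \mathcal{C} $, and since $ H $ has full row rank $ r $, every codeword of $ \mathcal{C} $ is uniquely $ \mathbf{x} H = (\mathbf{x} H_1, \ldots, \mathbf{x} H_\ell) $ for some $ \mathbf{x} \in \mathbb{F}_q^r $, with $ \mathbf{x} \neq \mathbf{0} $ giving a nonzero codeword.

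Next, because $ m = 1 $, the sum-rank weight merely counts nonzero blocks: $ {\rm wt}_{SR}(\mathbf{x} H) = \# \{ i : \mathbf{x} H_i \neq \mathbf{0} \} $. The crucial reformulation is that $ \mathbf{x} H_i = \mathbf{0} $ if, and only if, $ \mathbf{x} $ is orthogonal to every column of $ H_i $, i.e. $ \mathcal{H}_i \subseteq \mathbf{x}^\perp $. Hence $ {\rm wt}_{SR}(\mathbf{x} H) = \ell - \#\{ i : \mathcal{H}_i \subseteq \mathbf{x}^\perp \} $, and as $ \mathbf{x} $ ranges over all nonzero vectors, $ \mathbf{x}^\perp $ ranges over all hyperplanes of $ \mathbb{F}_q^r $ (scaling $ \mathbf{x} $ changes neither $ \mathbf{x}^\perp $ nor the weight). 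Therefore
$$ {\rm d}_{SR}(\mathcal{C}) = \ell - \max_{\mathcal{V}} \#\{ i : \mathcal{H}_i \subseteq \mathcal{V} \}, $$
the maximum taken over all $ (r-1) $-dimensional subspaces $ \mathcal{V} \subseteq \mathbb{F}_q^r $.

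The key step, which I expect to be the real content, is bounding that maximum. The spread elements contained in a fixed hyperplane $ \mathcal{V} \cong \mathbb{F}_q^{r-1} $ are $ N $-dimensional subspaces of $ \mathcal{V} $ meeting pairwise only in $ \{ \mathbf{0} \} $, so they themselves form a partial $ N $-spread of $ \mathcal{V} $; consequently their number cannot exceed the maximal size of a partial $ N $-spread in $ \mathbb{F}_q^{r-1} $. The upper bound in Proposition \ref{prop size of maximal partial spread} caps this by $ (q^{r-1} - q^{s'})/(q^N - 1) $, where $ s' $ is the remainder of $ r-1 $ on division by $ N $. Combining with the lower bound on $ \ell = |\mathcal{P}| $ from the same proposition yields
$$ {\rm d}_{SR}(\mathcal{C}) \geq \left( \frac{q^r - q^s}{q^N - 1} - q^s + 1 \right) - \frac{q^{r-1} - q^{s'}}{q^N - 1}. $$

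It then remains to simplify this into the stated form, which is routine bookkeeping split into the two cases. If $ s = 0 $ then $ s' = N-1 $, and if $ s > 0 $ then $ s' = s-1 $; in each case I would compute $ q^{r-1}(q-1) \bmod (q^N - 1) $ using $ q^{r-1} \equiv q^{s'} \pmod{q^N-1} $ to recognize the leading fraction as exactly $ \lceil q^{r-1}(q-1)/(q^N-1) \rceil $ (when $ s=0 $) or $ \lfloor q^{r-1}(q-1)/(q^N-1) \rfloor $ (when $ s>0 $), matching the two branches of (\ref{eq SR distance of proper simplex codes m=1 }). The only delicate points are the boundary hypothesis $ r \geq N $ (the bound is degenerate when $ r = N $, where $ \ell = 1 $) and verifying that the residue $ q^{s'}(q-1) $ lies strictly between $ 0 $ and $ q^N - 1 $, so that the floor/ceiling identification is exact.
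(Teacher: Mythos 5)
Your proposal is correct and follows essentially the same route as the paper's proof: pass to the maximal-size partial $N$-spread underlying the dual Hamming code via Theorem \ref{th SR Hamming given by partial spreads}, rewrite $ {\rm wt}_{SR}(\mathbf{x}H) $ as $ \ell $ minus the number of spread elements inside the hyperplane $ \mathbf{x}^\perp $, and combine the lower bound on $ \ell $ with the upper bound on partial $N$-spreads in $ \mathbb{F}_q^{r-1} $ from Proposition \ref{prop size of maximal partial spread}. The only cosmetic difference is in the final bookkeeping, where you identify the combined fraction with the floor/ceiling via the congruence $ q^{r-1} \equiv q^{s'} \pmod{q^N-1} $ (watching the $ N=1 $ edge case, where the residue vanishes), while the paper reaches the same expressions by a direct algebraic estimate plus integrality.
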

\begin{proof}
By Definition \ref{def sum-rank simplex codes} and Theorem \ref{th SR Hamming given by partial spreads}, $ \mathcal{C} $ has a generator matrix of the form
$$ H = (H_1, H_2, \ldots, H_\ell) \in \mathbb{F}_q^{r \times n}, $$
where $ H_i \in \mathbb{F}_q^{r \times N} $ and $ \mathcal{H}_i = {\rm Col}(H_i) \subseteq \mathbb{F}_q^r $, for $ i = 1,2, \ldots, \ell $, and $ \mathcal{P} = \{ \mathcal{H}_1, \mathcal{H}_2, \ldots, \mathcal{H}_\ell \} $ is a maximal-size partial $ N $-spread in $ \mathbb{F}_q^r $. 

Let $ \mathbf{x} \in \mathbb{F}_q^r $ be distinct from zero. Then it holds that
$$ {\rm wt}_{SR}(\mathbf{x} H) = \sum_{i=1}^\ell {\rm wt}_R(\mathbf{x} H_i) = \ell - | \{ i \in [\ell] \mid \mathcal{H}_i \subseteq \mathbf{x}^\perp \} |, $$
where $ \mathbf{x}^\perp \subseteq \mathbb{F}_q^r $ is the dual of the linear code generated by the vector $ \mathbf{x} $. In other words,
$$ {\rm wt}_{SR}(\mathbf{x} H) = \ell - | \mathcal{Q} |, $$
where $ \mathcal{Q} = \{ \mathcal{H}_i \mid \mathcal{H}_i \subseteq \mathbf{x}^\perp, 1 \leq i \leq \ell \} $ is the partial $ N $-spread $ \mathcal{P} $ restricted to the hyperplane $ \mathbf{x}^\perp $. 

Let $ s \geq 0 $ and $ t \geq 0 $ be the remainders of $ r $ and $ r-1 $ divided by $ N $, respectively. By Proposition \ref{prop size of maximal partial spread}, we deduce that
$$ {\rm wt}_{SR}(\mathbf{x}H) = \ell - |\mathcal{Q}| \geq \frac{q^r - q^s}{q^N - 1} - q^s + 1 - \frac{q^{r-1} - q^t}{q^N - 1}. $$
We will consider now the cases $ s = 0 $ and $ s > 0 $ separately.

Assume first that $ s = 0 $. Then $ t = N - 1 $ and
$$ \frac{q^r - q^s}{q^N - 1} - q^s + 1 - \frac{q^{r-1} - q^t}{q^N - 1} = \frac{q^r - 1}{q^N - 1} - \frac{q^{r-1} - q^{N-1}}{q^N - 1} \geq $$
$$ \frac{q^r - 1}{q^N - 1} - \frac{q^{r-1} - 1}{q^N - 1} = \frac{q^{r-1}(q-1)}{q^N - 1}, $$
and (\ref{eq SR distance of proper simplex codes m=1 }) is proven in this case.

Next assume that $ s > 0 $. Then $ t = s - 1 \geq 0 $ and
$$ \frac{q^r - q^s}{q^N - 1} - q^s + 1 - \frac{q^{r-1} - q^t}{q^N - 1} = \frac{q^r - q^{r-1}}{q^N - 1} - \frac{q^s - q^{s-1}}{q^N - 1} - q^s + 1. $$
Let $ h \geq 0 $ be the unique integer such that $ r = hN + s $. We have that
$$ \frac{q^r - q^{r-1}}{q^N - 1} = \frac{q^s - q^{s-1}}{q^N - 1} q^{hN} = (q^s - q^{s-1})\frac{q^{hN} - 1}{q^N - 1} + \frac{q^s - q^{s-1}}{q^N - 1}, $$
where the term $ (q^s - q^{s-1}) (q^{hN} - 1) / (q^N - 1) $ is an integer, and moreover
$$ 0 < \frac{q^s - q^{s-1}}{q^N - 1} < 1, $$
since $ s < N $. Thus we deduce that
$$ \left\lfloor \frac{q^r - q^{r-1}}{q^N - 1} \right\rfloor = (q^s - q^{s-1})\frac{q^{hN} - 1}{q^N - 1} = \frac{q^r - q^{r-1}}{q^N - 1} - \frac{q^s - q^{s-1}}{q^N - 1} . $$
Hence (\ref{eq SR distance of proper simplex codes m=1 }) is proven also in the case $ s > 0 $, and we are done.
\end{proof}

\begin{remark}
Observe that (\ref{eq SR distance of proper simplex codes m=1 }) is an equality in the case of classical simplex codes, which is recovered from Theorem \ref{th SR distance of proper simplex codes m=1 } by further setting $ N = 1 $, hence $ s = 0 $. Furthermore, a classical simplex code is a constant-weight code, meaning that any of its non-zero codewords has weight equal to $ q^{r-1} $. See \cite[Th. 1.8.3]{pless}. We leave the sharpness of (\ref{eq SR distance of proper simplex codes m=1 }) as an open problem, together with whether sum-rank simplex codes are in general constant-weight codes. Obviously, the use of sharper bounds on the sizes of maximal-size partial spreads than those in Proposition \ref{prop size of maximal partial spread} would help in this regard.
\end{remark}

\section{Some applications} \label{sec applications}

In this section, we provide two possible applications of sum-rank Hamming and simplex codes.

\subsection{Multishot matrix-multiplicative channels} \label{subsec matrix-mult channels}

Multishot matrix-multiplicative channels model, among others, multiple uses of linearly coded networks where the transmitter has no knowledge of the network or linear network code, and the linear combinations in intermediate nodes are not seen by the receiver. We refer to \cite{multishot} and \cite{secure-multishot} for details on multishot network coding. We note that a similar channel may be used for space-time coding, see \cite[Sec. III]{space-time-kumar}.

We follow \cite[Sec. II]{secure-multishot} to define multishot matrix-multiplicative channels with errors and erasures. Fix positive integers $ \ell , n_1, n_2, \ldots, n_\ell $, $ m_1, m_2, \ldots, m_\ell $ and denote $ \mathbf{n} = (n_1, n_2, \ldots, n_\ell) $ and $ \mathbf{m} = (m_1, m_2, \ldots, m_\ell) $. An $ (\mathbf{m} \times \mathbf{n}) $-multishot matrix-multiplicative channel with noise over the base field $ \mathbb{F}_q $ takes as input a sequence of matrices
$$ (X_1, X_2, \ldots, X_\ell) \in \mathbb{F}_q^{m_1 \times n_1} \times \mathbb{F}_q^{m_2 \times n_2} \times \cdots \times \mathbb{F}_q^{m_\ell \times n_\ell} $$
and outputs a sequence of matrices
\begin{equation}
(X_1A_1, X_2A_2, \ldots, X_\ell A_\ell) + (E_1, E_2, \ldots, E_\ell) \in \mathbb{F}_q^{m_1 \times N_1} \times \mathbb{F}_q^{m_2 \times N_2} \times \cdots \times \mathbb{F}_q^{m_\ell \times N_\ell},
\label{eq output matrix multiplicative channel}
\end{equation}
for some matrix $ A_i \in \mathbb{F}_q^{n_i \times N_i} $, called the $ i $th transfer matrix, and some matrix $ E_i \in \mathbb{F}_q^{m_i \times N_i} $, called the $ i $th error matrix, for $ i = 1,2, \ldots, \ell $, and for some positive integers $ N_1, N_2, \ldots, N_\ell $.

The number of sum-rank errors and erasures are given, under this model, by
\begin{equation}
\sum_{i=1}^\ell {\rm Rk}(E_i) \quad \textrm{and} \quad \sum_{i=1}^\ell (n_i - {\rm Rk}(A_i)),
\label{eq at most t and rho}
\end{equation}
respectively. As it was the case for the sum-rank metric recovering the Hamming and rank metrics, it holds that multishot matrix-multiplicative channels with errors and erasures as in (\ref{eq at most t and rho}) recover discrete memoryless channels with symbol-wise errors and erasures by setting $ m_1 = m_2 = \ldots = m_\ell = n_1 = n_2 = \ldots = n_\ell = N_1 = N_2 = \ldots = N_\ell = 1 $, and they recover singleshot matrix-multiplicative channels simply by setting $ \ell = 1 $.

We will only consider the case $ m = m_1 = m_2 = \ldots = m_\ell $. In such a case, we may consider a sequence of matrices in $ \mathbb{F}_q^{m \times n_1} \times \mathbb{F}_q^{m \times n_2} \times \cdots \times \mathbb{F}_q^{m \times n_\ell} $ as a vector in $ \mathbb{F}_{q^m}^n $, where $ n = n_1 + n_2 + \cdots + n_\ell $, via the matrix representation map in (\ref{eq def matrix representation map}). 

With such considerations, we say that a $ k $-dimensional linear code $ \mathcal{C} \subseteq \mathbb{F}_{q^m}^n $ can coherently correct $ t $ sum-rank errors and $ \rho $ sum-rank erasures if, for all transfer matrices $ A_i \in \mathbb{F}_q^{n_i \times N_i} $, for $ i = 1,2, \ldots, \ell $, such that $ \sum_{i=1}^\ell {\rm Rk}(A_i) \geq n - \rho $, there exists a decoder $ D : \mathbb{F}_{q^m}^N \longrightarrow \mathbb{F}_{q^m}^k $, possibly depending on $ A_1, A_2, \ldots, A_\ell $, where $ N = N_1 + N_2 + \cdots + N_\ell $, such that 
$$ D(\mathbf{c}^{(1)} A_1 + \mathbf{e}^{(1)}, \mathbf{c}^{(2)} A_2 + \mathbf{e}^{(2)}, \ldots, \mathbf{c}^{(\ell)} A_\ell + \mathbf{e}^{(\ell)}) = \mathbf{x}, $$
where $ \mathbf{c} = (\mathbf{c}^{(1)}, \mathbf{c}^{(2)}, \ldots, \mathbf{c}^{(\ell)}) \in \mathbb{F}_{q^m}^n $ is the encoding of $ \mathbf{x} \in \mathbb{F}_{q^m}^n $ by some fixed generator matrix of $ \mathcal{C} $, for all $ \mathbf{x} \in \mathbb{F}_{q^m}^k $ and all $ \mathbf{e} = (\mathbf{e}^{(1)}, \mathbf{e}^{(2)}, \ldots, \mathbf{e}^{(\ell)}) \in \mathbb{F}_{q^m}^N $ such that
$$ {\rm wt}_{SR}(\mathbf{e}^{(1)}, \mathbf{e}^{(2)}, \ldots, \mathbf{e}^{(\ell)}) = \sum_{i=1}^\ell {\rm Rk}(E_i) \leq t, $$
where $ E_i = M_\mathcal{A}(\mathbf{e}^{(i)}) \in \mathbb{F}_q^{m \times N_i} $, for $ i = 1,2, \ldots, \ell $. The word coherently refers to the fact that the transfer matrices $ A_1, A_2, \ldots, A_\ell $ are known to the decoder (that is why it may depend on such matrices). An adaptation to the non-coherent case may be done by lifting, see \cite[Subsec. IV-C]{secure-multishot}.

With these definitions, the following result is a particular case of \cite[Th. 1]{secure-multishot}.

\begin{proposition} [\textbf{\cite{secure-multishot}}]
Let $ n = n_1 + n_2 + \cdots + n_\ell $ be a sum-rank length partition, and let $ t \geq 0 $ and $ \rho \geq 0 $ be integers. A linear code $ \mathcal{C} \subseteq \mathbb{F}_{q^m}^n $ can coherently correct up to $ t $ sum-rank errors and $ \rho $ sum-rank erasures in any $ (\mathbf{m} \times \mathbf{n}) $-multishot matrix-multiplicative channel if, and only if, it satisfies that $ {\rm d}_{SR}(\mathcal{C}) > 2t + \rho $.
\end{proposition}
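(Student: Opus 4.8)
The plan is to package the whole statement around a single inequality that measures how much the transfer matrices can erode the sum-rank weight of a nonzero codeword, and then to run the classical error-and-erasure argument with that inequality replacing the usual triangle bound. Write $ A = \diag(A_1, A_2, \ldots, A_\ell) $ for the (coherently known) transfer, and recall that right multiplication by a base-field matrix commutes with the representation map, so that $ M_\mathcal{A}(\mathbf{d}^{(i)} A_i) = M_\mathcal{A}(\mathbf{d}^{(i)}) A_i $ for every $ \mathbf{d}^{(i)} \in \mathbb{F}_{q^m}^{n_i} $. The first thing I would establish is that, for each block, Sylvester's rank inequality gives $ {\rm Rk}(M_\mathcal{A}(\mathbf{d}^{(i)}) A_i) \geq {\rm Rk}(M_\mathcal{A}(\mathbf{d}^{(i)})) - (n_i - {\rm Rk}(A_i)) $. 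Summing over $ i $ and invoking the erasure budget $ \sum_{i=1}^\ell (n_i - {\rm Rk}(A_i)) \leq \rho $ then yields $ {\rm wt}_{SR}(\mathbf{d} A) \geq {\rm wt}_{SR}(\mathbf{d}) - \rho $ for all $ \mathbf{d} \in \mathcal{C} $.

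For the sufficiency (``if'') direction I would argue by confusability. A coherent decoder for fixed $ A_1, \ldots, A_\ell $ can fail only if there are distinct codewords $ \mathbf{c}, \mathbf{c}' $ and error vectors $ \mathbf{e}, \mathbf{e}' $ with $ {\rm wt}_{SR}(\mathbf{e}), {\rm wt}_{SR}(\mathbf{e}') \leq t $ such that $ \mathbf{c} A + \mathbf{e} = \mathbf{c}' A + \mathbf{e}' $, that is, $ \mathbf{d} A = \mathbf{e}' - \mathbf{e} $ with $ \mathbf{d} = \mathbf{c} - \mathbf{c}' \neq \mathbf{0} $. The inequality above forces $ {\rm wt}_{SR}(\mathbf{d} A) \geq {\rm d}_{SR}(\mathcal{C}) - \rho > 2t $ under the hypothesis $ {\rm d}_{SR}(\mathcal{C}) > 2t + \rho $, whereas the triangle inequality together with blockwise subadditivity of the rank gives $ {\rm wt}_{SR}(\mathbf{e}' - \mathbf{e}) \leq {\rm wt}_{SR}(\mathbf{e}') + {\rm wt}_{SR}(\mathbf{e}) \leq 2t $, a contradiction. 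Hence no two codewords are confusable under any admissible $ A $, and defining $ D $ to return the message whose codeword is the unique one consistent with the received word proves correctability.

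For the necessity (``only if'') direction I would prove the contrapositive by an explicit construction. Assuming $ {\rm d}_{SR}(\mathcal{C}) \leq 2t + \rho $, fix $ \mathbf{d} \in \mathcal{C} \setminus \{ \mathbf{0} \} $ of minimum weight $ d = {\rm d}_{SR}(\mathcal{C}) $, and choose nonnegative integers $ \delta_i \leq {\rm Rk}(M_\mathcal{A}(\mathbf{d}^{(i)})) $ with $ \sum_i \delta_i = \min(d, \rho) $. For each block I would take $ A_i $ whose column space $ {\rm Col}(A_i) \subseteq \mathbb{F}_q^{n_i} $ contains the kernel of $ M_\mathcal{A}(\mathbf{d}^{(i)}) $ (viewed as a linear map on column vectors) and has dimension $ n_i - \delta_i $; then the rank drops by exactly $ \delta_i $ in that block while the rank deficiency is exactly $ \delta_i $, so the total deficiency is at most $ \rho $ and $ {\rm wt}_{SR}(\mathbf{d} A) = d - \min(d,\rho) \leq 2t $. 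Writing $ \mathbf{f} = \mathbf{d} A $, whose blockwise ranks sum to at most $ 2t $, I would cut a rank decomposition in each block to split $ \mathbf{f} = \mathbf{g} + \mathbf{h} $ with $ {\rm wt}_{SR}(\mathbf{g}), {\rm wt}_{SR}(\mathbf{h}) \leq t $, and set $ \mathbf{e}' = \mathbf{g} $, $ \mathbf{e} = -\mathbf{h} $. Then $ \mathbf{c} $ transmitted with error $ \mathbf{e} $ and $ \mathbf{c} - \mathbf{d} $ transmitted with error $ \mathbf{e}' $ produce identical outputs, so no decoder can succeed for this admissible $ A $.

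The step I expect to be the main obstacle is the transfer-matrix construction in the converse: one must realize the prescribed blockwise rank drops $ \delta_i $ while keeping the total rank deficiency within the erasure budget, which forces the choice of $ {\rm Col}(A_i) $ relative to the kernels of the blocks $ M_\mathcal{A}(\mathbf{d}^{(i)}) $ rather than arbitrarily. Everything else is routine once one records that the sum-rank weight is additive across blocks, so that the residual $ \mathbf{f} $ can be partitioned freely between the two error vectors, and that Sylvester's inequality is tight in exactly the direction needed for the converse.
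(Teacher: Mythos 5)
Your proof is correct, and it is worth noting that the paper itself contains no argument for this proposition: it is quoted as a particular case of Theorem 1 of the cited work on secure multishot network coding, so your self-contained derivation is necessarily a different route from the paper's (which is citation only). Both pillars of your argument are sound. For sufficiency, Sylvester's rank inequality applied blockwise, together with the identity $ M_\mathcal{A}(\mathbf{d}^{(i)} A_i) = M_\mathcal{A}(\mathbf{d}^{(i)}) A_i $ (valid because $ A_i $ has entries in $ \mathbb{F}_q $), gives $ \wt_{SR}(\mathbf{d}A) \geq \wt_{SR}(\mathbf{d}) - \rho $ for every admissible transfer, and combined with subadditivity of the rank this rules out confusability when $ {\rm d}_{SR}(\mathcal{C}) > 2t + \rho $. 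For necessity, your choice of $ \Col(A_i) \supseteq \ker(M_\mathcal{A}(\mathbf{d}^{(i)})) $ with $ \dim \Col(A_i) = n_i - \delta_i $ makes the rank drop in block $ i $ exactly equal to the rank deficiency $ \delta_i $, so the total deficiency is $ \min({\rm d}_{SR}(\mathcal{C}), \rho) \leq \rho $ while $ \wt_{SR}(\mathbf{d}A) = {\rm d}_{SR}(\mathcal{C}) - \min({\rm d}_{SR}(\mathcal{C}), \rho) \leq 2t $; the blockwise rank-one splitting of the residual into two errors of weight at most $ t $ then yields two admissible transmissions with identical channel outputs, defeating every decoder for that fixed transfer. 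Two small points you should make explicit: in the converse the output sizes $ N_i $ are at your disposal (take, e.g., $ N_i = \max(n_i - \delta_i, 1) $, which also covers the degenerate case $ \delta_i = n_i $, where $ A_i = 0 $), and in the sufficiency direction recovering the codeword recovers the message because encoding by a fixed generator matrix is injective. Your argument exploits linearity (reducing confusability to a single difference codeword $ \mathbf{d} $), which is exactly the setting of the proposition; the cited theorem is more general, which is what the external reference buys over your direct proof, but nothing is lost here.
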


Hence, sum-rank Hamming codes are the longest linear codes that can correct a single sum-rank error ($ t = 1 $ and $ \rho = 0 $) over such channels. 

\begin{theorem}
Let $ n = \ell N $ be a sum-rank length partition with equal sublengths $ N = n_1 = n_2 = \ldots = n_\ell $. A proper sum-rank Hamming code $ \mathcal{C} \subseteq \mathbb{F}_{q^m}^n $ with redundancy $ r = n - \dim(\mathcal{C}) $ is the linear code of redundancy $ r $ capable of correcting a single sum-rank error in an $ (\mathbf{m} \times \mathbf{n}) $-multishot matrix-multiplicative channel with $ m_1 = m_2 = \ldots = m_\ell $ and $ n_1 = n_2 = \ldots = n_\ell $ for the largest number of shots $ \ell $.
\end{theorem}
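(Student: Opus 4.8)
The plan is to obtain the result by directly combining Definition \ref{def sum-rank hamming codes} with the preceding proposition from \cite{secure-multishot}. First I would specialize that proposition to the case $ t = 1 $ and $ \rho = 0 $: it then states that a linear code $ \mathcal{C} \subseteq \mathbb{F}_{q^m}^n $ can coherently correct a single sum-rank error in any $ (\mathbf{m} \times \mathbf{n}) $-multishot matrix-multiplicative channel if, and only if, $ {\rm d}_{SR}(\mathcal{C}) > 2 $. Since $ {\rm d}_{SR}(\mathcal{C}) $ is a positive integer, this is equivalent to $ {\rm d}_{SR}(\mathcal{C}) \geq 3 $.

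Next I would unwind the phrase ``capable of correcting a single sum-rank error $ \ldots $ for the largest number of shots $ \ell $''. With $ q $, $ m $, the redundancy $ r = n - \dim(\mathcal{C}) $ and the common sublength $ N = n_1 = n_2 = \ldots = n_\ell $ all fixed, this means that $ \ell $ is maximal among all linear codes enjoying the single-error-correction property for those parameters. By the previous step, single-error correction is exactly the condition $ {\rm d}_{SR}(\mathcal{C}) \geq 3 $, so the requirement becomes: $ {\rm d}_{SR}(\mathcal{C}) \geq 3 $, $ r = n - \dim(\mathcal{C}) $, and there is no linear code $ \mathcal{C}^\prime \subseteq \mathbb{F}_{q^m}^{n^\prime} $ with $ {\rm d}_{SR}(\mathcal{C}^\prime) \geq 3 $ and redundancy $ r $ for a sum-rank length partition $ n^\prime = N \ell^\prime $ with $ \ell^\prime > \ell $.

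This is verbatim the defining property in Definition \ref{def sum-rank hamming codes}. Hence the codes described in the statement are precisely the proper sum-rank Hamming codes, which proves the theorem. No step is a genuine obstacle here; the content is merely a translation between the channel model and the sum-rank-distance formulation already established. The only point deserving a moment's care is matching the universal quantifier ``in any such channel'' (equivalently, correctability over all admissible transfer matrices $ A_1, A_2, \ldots, A_\ell $) with the intrinsic distance condition, and this matching is exactly what the cited equivalence with $ {\rm d}_{SR}(\mathcal{C}) \geq 3 $ provides.
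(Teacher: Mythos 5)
Your proposal is correct and takes essentially the same approach as the paper: the paper offers no separate argument beyond remarking that, by the cited proposition specialized to $ t = 1 $ and $ \rho = 0 $, single-error correction over such channels is equivalent to $ {\rm d}_{SR}(\mathcal{C}) \geq 3 $, so that maximality of $ \ell $ reduces verbatim to Definition \ref{def sum-rank hamming codes}. Your observation that $ {\rm d}_{SR}(\mathcal{C}) > 2 $ and $ {\rm d}_{SR}(\mathcal{C}) \geq 3 $ coincide for integer-valued distances, and your matching of the universal quantifier over transfer matrices with the intrinsic distance condition, are exactly the intended content.
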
 

Improper sum-rank Hamming code admit a similar interpretation for multishot matrix-multiplicative channels. However, their main feature is not necessarily admitting the largest number of shots $ \ell $, but having the largest overall sum-rank length $ n $ attainable by increasing $ \ell $ and/or the numbers of matrix columns $ n_1, n_2, \ldots, n_\ell $.

We conclude by noting that, if $ m = 1 $, then no code can correct a single sum-rank error if $ \ell \leq 2 $, since we have that $ {\rm d}_{SR}(\mathcal{C}) \leq \ell $ when $ m = 1 $. In particular, no rank error-correcting code exists if $ m = 1 $, as in that case $ \ell = 1 $. Therefore, in this work we have provided the first known error-correcting codes for matrix-multiplicative channels with $ m = 1 $ but $ N = n_1 = n_2 = \ldots = n_\ell > 1 $, by making use of several shots of the channel, i.e. $ \ell \geq 3 $. In particular, choosing $ q = 2 $ and $ m = 1 $, we have provided the first codes that can correct bit-wise errors in linearly coded networks where the transmitter has no knowledge of the network or linear network code and the linear combinations of intermediate nodes are not known by the receiver. The decoding algorithm from Subsection \ref{subsec syndrome decoding m=1 } can be trivially implemented in this scenario.

\subsection{Locally repairable codes} \label{subsec lrc}

Locally repairable codes \cite{gopalan} are an attractive alternative to MDS codes for large distributed storage systems, since they allow to repair a single erasure (most common erasure pattern) by contacting a small number, called locality, of other nodes, while being able to correct more erasures in catastrophic cases. 

In \cite{lrc-sum-rank}, we established a connection between the sum-rank metric and locally repairable codes. We start by recalling the following result, which is \cite[Cor. 4]{lrc-sum-rank}.

\begin{lemma} [\textbf{\cite{lrc-sum-rank}}] \label{lemma global code for lrc}
Let $ n = n_1 + n_2 + \cdots + n_\ell $ and let $ \mathcal{C}_{out} \subseteq \mathbb{F}_{q^m}^n $ be a linear code. Choose an $ n_i $-dimensional local linear code $ \mathcal{C}_i \subseteq \mathbb{F}_q^{N_i} $ with generator matrix $ A_i \in \mathbb{F}_q^{n_i \times N_i} $ (thus $ n_i \leq N_i $), for $ i = 1,2, \ldots, \ell $. Define the global code $ \mathcal{C}_{glob} \subseteq \mathbb{F}_{q^m}^M $ with total length $ M = N_1 + N_2 + \cdots + N_\ell $ as
\begin{equation}
\mathcal{C}_{glob} = \mathcal{C}_{out} {\rm diag}(A_1, A_2, \ldots, A_\ell) \subseteq \mathbb{F}_{q^m}^M .
\label{eq def global code for lrc}
\end{equation}
Denote by $ \Gamma_i \subseteq [M] $ the set of coordinates ranging from $ \sum_{j=1}^{i-1} N_j + 1 $ to $ \sum_{j=1}^i N_j $, for $ i = 1,2, \ldots, \ell $. Then $ \dim(\mathcal{C}_{glob}) = \dim(\mathcal{C}_{out}) $ and any erasure pattern $ \mathcal{E}_i \subseteq \Gamma_i $ with $ | \mathcal{E}_i | < {\rm d}_H(\mathcal{C}_i) $ can be corrected by $ \mathcal{C}_{glob} $ with the same complexity over the same field as with $ \mathcal{C}_i $ and only using the $ N_i - |\mathcal{E}_i| $ symbols in $ \Gamma_i \setminus \mathcal{E}_i $, for $ i = 1,2, \ldots, \ell $. 
\end{lemma}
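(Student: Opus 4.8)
The plan is to prove the two assertions in turn: first the dimension equality, then the local erasure-correction property, folding the complexity bookkeeping into the latter.

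For the dimension, I would show that right multiplication by $ \diag(A_1, A_2, \ldots, A_\ell) $ is injective on all of $ \mathbb{F}_{q^m}^n $, which immediately yields $ \dim(\mathcal{C}_{glob}) = \dim(\mathcal{C}_{out}) $, since restricting an injective linear map to a subspace preserves dimension. Writing $ \mathbf{c} = (\mathbf{c}^{(1)}, \ldots, \mathbf{c}^{(\ell)}) $ with $ \mathbf{c}^{(i)} \in \mathbb{F}_{q^m}^{n_i} $, the block-diagonal structure gives $ \mathbf{c} \, \diag(A_1, \ldots, A_\ell) = (\mathbf{c}^{(1)} A_1, \ldots, \mathbf{c}^{(\ell)} A_\ell) $, so injectivity reduces to injectivity of each map $ \mathbf{v} \mapsto \mathbf{v} A_i $. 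Since $ A_i $ is a generator matrix of an $ n_i $-dimensional code, it has rank $ n_i $ over $ \mathbb{F}_q $, hence full row rank; and full row rank over $ \mathbb{F}_q $ persists over the extension $ \mathbb{F}_{q^m} $ (ranks are read off nonzero minors, which are unchanged by the extension). Therefore each block map is injective, and so is the whole map.

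For the erasure correction, the key observation I would establish is that the block-$ i $ restriction of a global codeword decomposes row-wise into local codewords over $ \mathbb{F}_q $. Concretely, a codeword of $ \mathcal{C}_{glob} $ restricted to $ \Gamma_i $ equals $ \mathbf{c}^{(i)} A_i $ for some $ \mathbf{c}^{(i)} \in \mathbb{F}_{q^m}^{n_i} $; applying the matrix representation map of (\ref{eq def matrix representation map}) and using that $ M_\mathcal{A} $ is $ \mathbb{F}_q $-linear and commutes with right multiplication by the $ \mathbb{F}_q $-matrix $ A_i $, we obtain $ M_\mathcal{A}(\mathbf{c}^{(i)} A_i) = M_\mathcal{A}(\mathbf{c}^{(i)}) A_i $. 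Each of the $ m $ rows of $ M_\mathcal{A}(\mathbf{c}^{(i)}) $ lies in $ \mathbb{F}_q^{n_i} $, so each row of $ M_\mathcal{A}(\mathbf{c}^{(i)}) A_i $ is an $ \mathbb{F}_q $-linear combination of the rows of the generator matrix $ A_i $, that is, a codeword of $ \mathcal{C}_i $. Thus the block-$ i $ data is literally $ m $ stacked codewords of $ \mathcal{C}_i $, and an erasure set $ \mathcal{E}_i \subseteq \Gamma_i $ erases the same coordinates in every row.

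From here the argument is the classical erasure bound: because $ |\mathcal{E}_i| < {\rm d}_H(\mathcal{C}_i) $, any two codewords of $ \mathcal{C}_i $ agreeing off $ \mathcal{E}_i $ would differ by a nonzero word of weight at most $ |\mathcal{E}_i| < {\rm d}_H(\mathcal{C}_i) $, which is impossible; hence each row is uniquely determined by its entries in $ \Gamma_i \setminus \mathcal{E}_i $, and is recovered by the standard erasure decoder for $ \mathcal{C}_i $ run over $ \mathbb{F}_q $. Applying this decoder to each of the $ m $ rows reconstructs the whole block $ i $, using only the surviving $ N_i - |\mathcal{E}_i| $ symbols and only $ \mathbb{F}_q $-operations, so the per-row cost and the field are exactly those of erasure decoding for $ \mathcal{C}_i $. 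The step I expect to require the most care is the row-wise decomposition via $ M_\mathcal{A} $: it is what turns a decoding task that superficially lives over $ \mathbb{F}_{q^m} $ into $ m $ independent instances over $ \mathbb{F}_q $, and it is precisely the identity $ M_\mathcal{A}(\mathbf{c}^{(i)} A_i) = M_\mathcal{A}(\mathbf{c}^{(i)}) A_i $, valid because $ A_i $ has entries in $ \mathbb{F}_q $, that underlies the claim of the same complexity over the same field as $ \mathcal{C}_i $.
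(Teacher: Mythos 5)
Your proof is correct: the injectivity argument via the full row rank of each $ A_i $ (preserved under extension from $ \mathbb{F}_q $ to $ \mathbb{F}_{q^m} $) gives $ \dim(\mathcal{C}_{glob}) = \dim(\mathcal{C}_{out}) $, and the identity $ M_\mathcal{A}(\mathbf{c}^{(i)} A_i) = M_\mathcal{A}(\mathbf{c}^{(i)}) A_i $, valid because $ A_i $ has entries in $ \mathbb{F}_q $, correctly reduces local erasure correction to $ m $ parallel instances of erasure decoding of $ \mathcal{C}_i $ over $ \mathbb{F}_q $, which is exactly what the complexity claim requires. Note that the paper states Lemma \ref{lemma global code for lrc} without proof, citing it as Corollary 4 of \cite{lrc-sum-rank}, so there is no in-paper proof to compare against; your argument is a sound self-contained reconstruction of the intended one.
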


In other words, the global code $ \mathcal{C}_{glob} $ keeps the same dimension as the outer code $ \mathcal{C}_{out} $, but it also has the local erasure-correction capability of the local codes $ \mathcal{C}_1, \mathcal{C}_2, \ldots, \mathcal{C}_\ell $. Next, we recall the global erasure-correction capability of the global code in terms of the sum-rank erasure-correction capability of the outer code. The following result follows from Theorem \ref{th sum-rank distance is min among hamming distances} and it gives a simple sufficient condition for a global erasure pattern to be correctable by the global code.

\begin{proposition} \label{prop global erasure corrections}
Let notation be as in Lemma \ref{lemma global code for lrc}. Let $ \mathcal{E} \subseteq [M] $ be an erasure pattern and define $ \mathcal{E}_i = \mathcal{E} \cap \Gamma_i $ and $ \mathcal{R}_i = \Gamma_i \setminus \mathcal{E}_i $, for $ i = 1,2, \ldots, \ell $. Then the erasure pattern $ \mathcal{E} $ can be corrected by $ \mathcal{C}_{glob} $ if 
$$ {\rm d}_{SR}(\mathcal{C}_{out}) > n - \sum_{i=1}^\ell {\rm Rk}(A_i \vert_{\mathcal{R}_i}), $$
where $ {\rm d}_{SR} $ is considered with respect to the sum-rank length partition $ n = n_1 + n_2 + \cdots + n_\ell $.
\end{proposition}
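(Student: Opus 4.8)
The plan is to reduce the erasure-correction condition to a statement about codewords of $\mathcal{C}_{glob}$ that vanish on the surviving coordinates, and then to translate that statement, block by block, into a bound on the sum-rank weight of the corresponding word in $\mathcal{C}_{out}$. Writing $\mathcal{R} = [M] \setminus \mathcal{E} = \mathcal{R}_1 \cup \mathcal{R}_2 \cup \cdots \cup \mathcal{R}_\ell$ for the set of non-erased coordinates, the pattern $\mathcal{E}$ is correctable precisely when the projection of $\mathcal{C}_{glob}$ onto the coordinates in $\mathcal{R}$ is injective; by linearity (and using $\dim(\mathcal{C}_{glob}) = \dim(\mathcal{C}_{out})$ from Lemma \ref{lemma global code for lrc}) this is equivalent to the assertion that no non-zero codeword of $\mathcal{C}_{glob}$ is supported entirely within $\mathcal{E}$. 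I would prove the contrapositive of the proposition: assuming $\mathcal{E}$ is not correctable, I will exhibit a non-zero word of $\mathcal{C}_{out}$ whose sum-rank weight is at most $n - \sum_{i=1}^\ell \Rk(A_i|_{\mathcal{R}_i})$, which forces $\mathrm{d}_{SR}(\mathcal{C}_{out}) \le n - \sum_{i=1}^\ell \Rk(A_i|_{\mathcal{R}_i})$.

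Concretely, every codeword of $\mathcal{C}_{glob}$ has the form $\mathbf{c} = \mathbf{x}\,\diag(A_1, A_2, \ldots, A_\ell)$ for some $\mathbf{x} = (\mathbf{x}^{(1)}, \ldots, \mathbf{x}^{(\ell)}) \in \mathcal{C}_{out}$, and its $i$th block lies in $\Gamma_i$ and equals $\mathbf{x}^{(i)} A_i$. Restricting to the surviving coordinates $\mathcal{R}_i$ within block $i$ amounts to keeping the columns of $A_i$ indexed by $\mathcal{R}_i$, so the restriction of $\mathbf{c}$ to $\mathcal{R}$ is $(\mathbf{x}^{(1)} A_1|_{\mathcal{R}_1}, \ldots, \mathbf{x}^{(\ell)} A_\ell|_{\mathcal{R}_\ell})$. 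Non-correctability therefore produces a non-zero $\mathbf{x} \in \mathcal{C}_{out}$ with $\mathbf{x}^{(i)} A_i|_{\mathcal{R}_i} = \mathbf{0}$ for every $i$.

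The key step is to bound the block-wise rank. Since $A_i|_{\mathcal{R}_i}$ has entries in $\mathbb{F}_q$, the matrix representation map commutes with right multiplication by it, i.e. $M_\mathcal{A}(\mathbf{x}^{(i)} A_i|_{\mathcal{R}_i}) = M_\mathcal{A}(\mathbf{x}^{(i)})\, A_i|_{\mathcal{R}_i}$; this identity, verified directly from the $\mathbb{F}_q$-bilinearity of the coordinate expansion $c_j = \sum_k \alpha_k c_{kj}$ in (\ref{eq def matrix representation map}), is the point where I would be most careful. Consequently $\mathbf{x}^{(i)} A_i|_{\mathcal{R}_i} = \mathbf{0}$ means $M_\mathcal{A}(\mathbf{x}^{(i)})\, A_i|_{\mathcal{R}_i} = 0$, so every row of $M_\mathcal{A}(\mathbf{x}^{(i)})$ lies in the left kernel of $A_i|_{\mathcal{R}_i}$. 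Hence $\Row(\mathbf{x}^{(i)})$ is contained in that left kernel, whose dimension is $n_i - \Rk(A_i|_{\mathcal{R}_i})$, and since $\Rk(M_\mathcal{A}(\mathbf{x}^{(i)}))$ equals $\dim \Row(\mathbf{x}^{(i)})$ we obtain $\Rk(M_\mathcal{A}(\mathbf{x}^{(i)})) \le n_i - \Rk(A_i|_{\mathcal{R}_i})$ for each $i$.

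Summing over $i$ then yields $\mathrm{wt}_{SR}(\mathbf{x}) = \sum_{i=1}^\ell \Rk(M_\mathcal{A}(\mathbf{x}^{(i)})) \le \sum_{i=1}^\ell (n_i - \Rk(A_i|_{\mathcal{R}_i})) = n - \sum_{i=1}^\ell \Rk(A_i|_{\mathcal{R}_i})$. As $\mathbf{x}$ is non-zero, this gives $\mathrm{d}_{SR}(\mathcal{C}_{out}) \le n - \sum_{i=1}^\ell \Rk(A_i|_{\mathcal{R}_i})$, completing the contrapositive. The main obstacle is not any single deep step but keeping the bookkeeping straight: identifying the restriction to $\mathcal{R}$ with column-selection on each $A_i$, and correctly using the commutation of $M_\mathcal{A}$ with $\mathbb{F}_q$-matrix multiplication to pass from the vanishing condition to the left-kernel containment that controls the rank.
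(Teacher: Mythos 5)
Your proof is correct, but it follows a genuinely different route from the paper's. The paper offers no written-out argument: it states that the proposition ``follows from Theorem \ref{th sum-rank distance is min among hamming distances},'' i.e.\ the intended proof reduces sum-rank erasure correction to Hamming erasure correction. Concretely, for each $ i $ one notes that knowing the surviving symbols $ \mathbf{x}^{(i)} A_i \vert_{\mathcal{R}_i} $ is the same as knowing $ \rho_i = \Rk(A_i\vert_{\mathcal{R}_i}) $ coordinates of $ \mathbf{x}^{(i)} Q_i $ for an invertible $ Q_i \in \mathbb{F}_q^{n_i \times n_i} $ whose first $ \rho_i $ columns form a basis of the column space of $ A_i\vert_{\mathcal{R}_i} $; the vector $ \mathbf{x} \diag(Q_1, \ldots, Q_\ell) $ then has at most $ n - \sum_{i=1}^\ell \rho_i $ unknown (Hamming-erased) coordinates, and by Theorem \ref{th sum-rank distance is min among hamming distances} the code $ \mathcal{C}_{out} \diag(Q_1, \ldots, Q_\ell) $ has minimum Hamming distance at least $ {\rm d}_{SR}(\mathcal{C}_{out}) $, so the hypothesis allows these erasures to be filled and $ \mathbf{x} $ recovered. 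You instead bypass Theorem \ref{th sum-rank distance is min among hamming distances} entirely: you argue the contrapositive, extracting from a non-correctable pattern a non-zero $ \mathbf{x} \in \mathcal{C}_{out} $ with $ \mathbf{x}^{(i)} A_i\vert_{\mathcal{R}_i} = \mathbf{0} $ for all $ i $, and then bound $ {\rm wt}_{SR}(\mathbf{x}) $ directly via the commutation $ M_\mathcal{A}(\mathbf{x}^{(i)} B) = M_\mathcal{A}(\mathbf{x}^{(i)}) B $ for $ B $ over $ \mathbb{F}_q $ and a left-kernel dimension count. Both arguments are sound; the paper's route is shorter once Theorem \ref{th sum-rank distance is min among hamming distances} is available and has the practical merit of exhibiting an actual decoding procedure (Hamming erasure filling in an isometric code), whereas yours is self-contained and elementary, making the rank bookkeeping explicit, at the cost of only establishing correctability abstractly (injectivity of the projection onto $ \mathcal{R} $) rather than constructively --- which suffices for the statement. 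One small remark: the equality $ \dim(\mathcal{C}_{glob}) = \dim(\mathcal{C}_{out}) $ is not needed where you invoke it (injectivity of the projection is equivalent to the absence of non-zero codewords supported in $ \mathcal{E} $ regardless); its real role is to identify recovery of the global codeword with recovery of the outer codeword, so this is a misplacement rather than a gap.
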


Observe that the local codes could in principle help in the global erasure correction by repairing erasures locally (inside each $ \Gamma_i $) whenever they can, as in Lemma \ref{lemma global code for lrc}. However, they would only be adding redundant symbols, hence they would not increase the ranks of the matrices $ A_i \vert_{\mathcal{R}_i} $, thus such a local erasure correction does not affect the global erasure-correction capability, as shown in Proposition \ref{prop global erasure corrections}.

Usually, the local codes are be chosen to be MDS, as they must be short codes. Moreover, they are usually considered such that 
\begin{equation}
A_i = \left( \begin{array}{ccccc}
1 & 0 & \ldots & 0 & 1 \\
0 & 1 & \ldots & 0 & 1 \\
\vdots & \vdots & \ddots & \vdots & \vdots \\
0 & 0 & \ldots & 1 & 1 \\
\end{array} \right) \in \mathbb{F}_q^{n_i \times (n_i + 1)},
\label{eq generator matrix for local MDS codes}
\end{equation}
preferably with $ q = 2 $. In that case, all local codes may correct up to one erasure very efficiently with only one XOR operation. The global code $ \mathcal{C}_{glob} $ would possibly have unequal localities $ n_1, n_2, \ldots, n_\ell $ (locality $ n_i $ for the $ i $th local group $ \Gamma_i $). However, the construction in (\ref{eq def global code for lrc}) allows any choice of local linear codes over $ \mathbb{F}_q $.

In the case of equal localities $ N = n_1 = n_2 = \ldots = n_\ell $, we may provide locally repairable codes over any field $ \mathbb{F}_q $, including $ q = 2 $, by making use of proper sum-rank Hamming codes with $ m = 1 $. In that case, we deduce the following construction from Corollary \ref{cor length of proper sum-rank hamming m=1 }.

\begin{theorem} \label{th lrcs from sum-rank hamming codes}
Let $ q $ be an arbitrary prime power (e.g. $ q = 2 $) and let $ n = \ell N $ be a sum-rank length partition with equal sublengths $ N = n_1 = n_2 = \ldots = n_\ell $. Let $ r \geq N $ be such that $ N $ divides $ r $. Let $ \mathcal{C}_{out} \subseteq \mathbb{F}_q^n $ be a proper sum-rank Hamming code with redundancy $ r = n - \dim(\mathcal{C}_{out}) $ and let the $ i $th local code $ \mathcal{C}_i \subseteq \mathbb{F}_q^{N_i} $ have a generator matrix as in (\ref{eq generator matrix for local MDS codes}). Then the global code given as in (\ref{eq def global code for lrc}) has locality $ N $ and number of local groups, total length and dimension given by
$$ \ell = \frac{q^r - 1}{q^N - 1}, \quad M = (N+1) \frac{q^r - 1}{q^N - 1} \quad \textrm{and} \quad k = N \frac{q^r - 1}{q^N - 1} - r = N \left( \frac{q^r - 1}{q^N - 1} - h \right) , $$
respectively, where $ r = h N $. Therefore, according to \cite[Def. 6]{lrc-sum-rank}, the number of local parities is $ 1 $ per local group (i.e., the redundancy of each local code) and the number of global parities is $ r $ (i.e., the redundancy of the outer code). Finally, $ \mathcal{C}_{glob} $ may correct one erasure locally by a simple XOR of the remaining nodes in that local group, and it may correct any erasure pattern with one erasure per local group plus any other $ 2 $ extra erasures anywhere.
\end{theorem}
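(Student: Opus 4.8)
The statement is essentially a bookkeeping consequence of stitching together Corollary \ref{cor length of proper sum-rank hamming m=1 }, Lemma \ref{lemma global code for lrc} and Proposition \ref{prop global erasure corrections}. The plan is to verify each reported parameter in turn, then assemble the two erasure-correction claims. First I would record the number of local groups: since $ N $ divides $ r $, Corollary \ref{cor length of proper sum-rank hamming m=1 } gives equality in (\ref{eq length proper sum-rank hamming m=1}), whence $ n = N \ell $ with $ \ell = (q^r-1)/(q^N-1) $. Writing $ r = hN $ gives the stated value of $ \ell $. Each local code $ \mathcal{C}_i $ has generator matrix as in (\ref{eq generator matrix for local MDS codes}), i.e. a single parity-check column appended to the identity, so $ N_i = N+1 $ for every $ i $; hence the total length is $ M = \sum_i N_i = (N+1)\ell = (N+1)(q^r-1)/(q^N-1) $.

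Next I would compute the dimension. By Lemma \ref{lemma global code for lrc}, $ \dim(\mathcal{C}_{glob}) = \dim(\mathcal{C}_{out}) = n - r $, and substituting $ n = N\ell $ and $ r = hN $ yields $ k = N(q^r-1)/(q^N-1) - r = N\big((q^r-1)/(q^N-1) - h\big) $. The locality is $ N $ because each local group $ \Gamma_i $ is governed by the code $ \mathcal{C}_i $ of dimension $ n_i = N $; the single appended parity column in (\ref{eq generator matrix for local MDS codes}) means each $ \mathcal{C}_i $ is a $ [N+1,N,2] $ single-parity-check code, giving exactly one local parity per group and the claimed XOR-based repair of a single local erasure. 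The $ r $ global parities are simply the redundancy of the outer code $ \mathcal{C}_{out} $, matching the definition cited from \cite[Def. 6]{lrc-sum-rank}.

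The substantive claim is the last one, concerning the global erasure-correction capability. Here I would invoke Proposition \ref{prop global erasure corrections}. Consider an erasure pattern with at most one erasure in each local group plus two additional erasures placed arbitrarily. For each group with at most one erasure, the corresponding surviving restriction $ A_i\vert_{\mathcal{R}_i} $ still has rank $ n_i = N $, since a $ [N+1,N,2] $ MDS code can lose any single coordinate without rank loss; such groups contribute nothing to the defect $ n - \sum_i \Rk(A_i\vert_{\mathcal{R}_i}) $. The two extra erasures fall in at most two groups, and in a group suffering a second erasure the rank of $ A_i\vert_{\mathcal{R}_i} $ drops by at most one per lost coordinate beyond the first. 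Thus the total defect $ n - \sum_i \Rk(A_i\vert_{\mathcal{R}_i}) $ is at most $ 2 $. Since $ \mathcal{C}_{out} $ is a proper sum-rank Hamming code, $ {\rm d}_{SR}(\mathcal{C}_{out}) \geq 3 > 2 $, so Proposition \ref{prop global erasure corrections} applies and the pattern is correctable.

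The main obstacle, and the step demanding the most care, is precisely the rank-defect accounting for $ A_i\vert_{\mathcal{R}_i} $ under multiple erasures in one group: one must argue that losing $ |\mathcal{E}_i| $ coordinates from the $ [N+1,N] $ MDS generator reduces its rank by exactly $ \max\{0, |\mathcal{E}_i| - 1\} $, which uses that any $ N $ of its $ N+1 $ columns are linearly independent. Once this local lemma is in hand, bounding the global defect by $ 2 $ is immediate, and the conclusion follows from the $ {\rm d}_{SR}\geq 3 $ threshold. All remaining identities are routine substitutions, so no further computation need be displayed.
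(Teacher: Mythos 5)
Your proof is correct and follows exactly the route the paper intends: the theorem is stated there without an explicit proof, as a deduction from Corollary \ref{cor length of proper sum-rank hamming m=1 } (for $\ell$, $M$, $k$), Lemma \ref{lemma global code for lrc} (for the dimension and local XOR repair) and Proposition \ref{prop global erasure corrections} (for the global erasure claim), which is precisely the assembly you carry out. Your rank-defect accounting --- a defect of $\max\{0,|\mathcal{E}_i|-1\}$ per local group since any $N$ of the $N+1$ columns of the matrix in (\ref{eq generator matrix for local MDS codes}) are linearly independent, hence a total defect at most $2 < 3 \leq {\rm d}_{SR}(\mathcal{C}_{out})$ --- correctly supplies the one detail the paper leaves implicit.
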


Table \ref{table parameters lrc} gives some choices of parameters for linear codes as in Theorem \ref{th lrcs from sum-rank hamming codes} for $ q = 2 $.

\begin{table}[t]
\centering
\begin{tabular}{c||c|c||c|c||c|c||c|c}
Locality $ N $ & 2 & 2 & 3 & 3 & 4 & 4 & 5 & 5 \\
\hline
No. local groups $ \ell $ & 4 & 21 & 9 & 73 & 17 & 273 & 33 & 1057 \\
\hline
Global parities $ r $ & 4 & 6 & 6 & 9 & 8 & 12 & 10 & 15 \\
\hline
Dimension $ k $ & 6 & 36 & 21 & 210 & 60 & 1080 & 155 & 5270 \\
\hline
Length $ M $ & 12 & 63 & 36 & 292 & 85 & 1365 & 198 & 6342 \\
\end{tabular} \\
\caption{Some parameters for the linear locally repairable codes described in Theorem \ref{th lrcs from sum-rank hamming codes} for $ q = 2 $, which may correct any $ 2 $ extra erasures on top of $ 1 $ erasure per local group.}
\label{table parameters lrc}
\end{table}

We have chosen to provide a construction based on sum-rank Hamming codes with $ m = 1 $, since it allows to work on any finite field $ \mathbb{F}_q $ (in particular $ q = 2 $) and since simplex codes have low information rate. Having high information rate even though only up to $ 2 $ extra erasures may be corrected is a desirable parameter regime for distributed storage applications.

Finally, note that we could have chosen any family of local linear codes in Theorem \ref{th lrcs from sum-rank hamming codes}. This would make sense, for instance, to obtain hierachical locally repairable codes by choosing local codes that are in turn locally repairable codes (see \cite[Subsec. V-C]{lrc-sum-rank}).

\section*{Acknowledgement}

The author wishes to thank Frank R. Kschischang for valuable discussions on this manuscript. The author also gratefully acknowledges the support from The Independent Research Fund Denmark (Grant No. DFF-7027-00053B).

\footnotesize
 
%\nocite*{}
%\bibliographystyle{plain}
%\bibliography{srhamming} 

\end{document}